\documentclass[11pt]{article}
\usepackage[a4paper, margin=1in]{geometry}
\usepackage{microtype}
\usepackage{graphicx}
\usepackage{subcaption}
\usepackage{hyperref}
\usepackage{amsmath}
\usepackage{amssymb}
\usepackage{mathtools}
\usepackage{amsthm}
\usepackage{parskip}

\makeatletter
\def\thm@space@setup{\thm@preskip=\parskip \thm@postskip=0pt}
\makeatother

\usepackage{times}
\usepackage[capitalize,noabbrev]{cleveref}
\usepackage[dvipsnames]{xcolor}
\usepackage{dsfont}
\usepackage{enumitem}
\usepackage{tikz}
\usetikzlibrary{positioning}
\usepackage{natbib}
\usepackage{makecell}
\usepackage{colortbl}
\usepackage{float}
\usepackage{xurl}
\usepackage{caption}

\theoremstyle{plain}
\newtheorem{theorem}{Theorem}[section]

\newtheorem{lemma}[theorem]{Lemma}
\newtheorem{observation}[theorem]{Observation}

\theoremstyle{definition}

\theoremstyle{remark}

\DeclareMathOperator*{\argmin}{arg\,min}
\DeclareSymbolFontAlphabet{\mathbb}{AMSb}

\newcommand{\disjoint}{\text{DISJ}}
\newcommand{\defn}{\emph}
\newcommand{\grayhl}[1]{\colorbox{gray!20}{$#1$}}

\title{\textbf{Stable Matching with Predictions: \\ Robustness and Efficiency under Pruned Preferences}}
\author{
Samuel McCauley\\ Williams College\\ \texttt{sam@cs.williams.edu}
\and 
Benjamin Moseley\thanks{B. Moseley and H. Niaparast were supported in part by Google Research Award, an Infor Research Award, and NSF grant CCF-2121744.}\\ Carnegie Mellon University\\ \texttt{moseleyb@andrew.cmu.edu} 
\and 
Helia Niaparast\footnotemark[1]\\ Carnegie Mellon University\\ \texttt{helia@cmu.edu}
\and
Shikha Singh\\ Williams College\\ \texttt{shikha@cs.williams.edu}
}
\date{}

\begin{document}

\maketitle

\begin{abstract}
In this paper, we study the fundamental problem of finding a stable matching in two-sided matching markets.  In the classic variant, it is assumed that both sides of the market submit a ranked list of all agents on the other side.  However, in large matching markets such as the National Resident Matching Program (NRMP), it is infeasible for hospitals to interview or mutually rank each resident. In this paper, we study the stable matching problem with truncated preference lists.  In particular, we assume that, based on historical datasets, each hospital has a predicted rank of its likely match and only ranks residents within a bounded interval around that prediction. 

We use the \emph{algorithms-with-predictions} framework and show that the classic deferred-acceptance (DA) algorithm used to compute stable matchings is robust to such truncation. We present two algorithms and theoretically and empirically evaluate their performance.  Our results show that even with reasonably accurate predictions, it is possible to significantly cut down on both \emph{instance size} (the length of preference lists) as well as the number of proposals made.
These results explain the practical success of the DA algorithm and connect market design to the emerging theory of algorithms with predictions.
\end{abstract}

\section{Introduction}

Centralized matching markets such as the National Resident Matching Program (NRMP) are among the most visible successes of market design. Each year, tens of thousands of residents and hospitals submit ranked preference lists, and a centralized algorithm assigns residents to hospitals. In 2024, the NRMP included over 50,000 applicants competing for roughly 40,000 residency positions~\citep{nrmp2024match}.

The theoretical foundation underlying these two-sided markets is the \defn{stable matching problem}~\citep{gale1962college}. In the classical problem, the agents on the two sides---the residents and hospitals---submit a ranked list of preferences over the opposite side. The goal of the problem is to find a \emph{stable} matching: an outcome where no pair of agents would both prefer each other over their assigned match (that is, there are no ``blocking pairs''). This guarantee of stability is the primary reason for the success and longevity of matching markets like the NRMP~\citep{roth1984evolution}.

While the classic stable matching problem assumes that participants submit complete preference lists over all potential partners, this is  infeasible in practice. Hospitals must decide whom to interview given limited time and resources. Consequently, they rely on past experience to form expectations about where they will match, adopting a data-driven approach to interviewing.

For example, an elite hospital may interview highly competitive candidates until it is confident it has reached its eventual match. Meanwhile, a mid-tier hospital may skip candidates who are unlikely to accept, focusing instead on a contiguous \emph{window} of mid-ranked applicants who are similar to those who matched there in the past. This behavior reflects a natural adaptation to capacity constraints: each hospital implicitly \emph{predicts} the approximate rank of its eventual match and restricts attention to residents near that prediction.

This predictive truncation highlights a difference between how such markets are studied in theory versus how they operate in practice.  We study two fundamental questions to bridge this gap between market design theory and practice:

\begin{enumerate}
    \item \textbf{Market Dynamics (Descriptive):} Does this heuristic behavior threaten the stability of the market? If agents only propose to a ``predicted'' set of candidates, do we still reach a stable matching?
    \item \textbf{Algorithmic Efficiency (Prescriptive):} Can we exploit these predictions to design faster algorithms? Specifically, can we bypass the quadratic runtime of standard stable matching algorithms by leveraging predictions of where hospitals will match?
\end{enumerate}

\subsection{Our Model: Algorithms with Predictions}

We study the stable matching problem with pruned preference lists in the \emph{algorithms with predictions} framework~\citep{mitzenmacher2022algorithms,lykouris2021competitive}. This new framework, also called \emph{learning-augmented algorithms}, is motivated by the premise that real-world applications repeatedly solve a problem on similar instances that share a common
underlying structure. Algorithms designed in this framework leverage predictions about these input instances to optimize efficiency on future computations.  This model has been extremely successful in improving algorithmic performance; see Section~\ref{sec:related}. 

In this model, the algorithm is given a prediction about the input instance and its performance is measured as a function of the prediction quality (as well as the input size).  In particular, the performance of the algorithm must (a) improve upon standard solutions if the predictions are good, (b) degrade proportionally to the prediction error, and (c) be robust to errors in predictions.

\subsection{Stable Matching with Predictions}
We study the stable matching problem in the algorithms-with-predictions model. We first review the classic problem.

Consider a two-sided matching market with $n$ residents $R$ and $n$ hospitals $H$.  Each agent $a$ (resident or hospital) has a preference list $L(a)$ ranking all agents on the other side.  We denote the stable matching instance by $I = (R \sqcup H, L)$. For a matching $\mu$ on $I$, let $\mu(a)$ denote $a$'s match in $\mu$.  A matching $\mu$ is \emph{perfect} if all agents are matched and is stable if it contains no blocking pairs.  A pair $(r,h)$ is a \emph{blocking pair} if $r$ prefers  $h$ to $\mu(r)$ and $h$ prefers $r$ to $\mu(h)$. 

The classic stable matching algorithm is the deferred-acceptance (DA) algorithm by \citet{gale1962college}.  The algorithm designates one side of the market as the proposing side and the other as the receiving side. In each round, every unmatched proposing agent makes an offer to the highest-ranked agent on its preference list who has not yet rejected it. Each agent on the receiving side then tentatively accepts its most preferred offer and rejects the others. The process continues until there are no rejections and all tentative matches become final. 

It is well-known that the DA algorithm produces a stable matching that is optimal for the proposing side and pessimal for the receiving side; see Theorem~\ref{thm:DA-optimality}.

\paragraph{Prediction model and truncation algorithms.}
Given a stable matching instance $I$, let $\mu$ be any stable matching on $I$. We assume that each hospital receives a predicted rank of the resident it matches with in $\mu$. That is, each hospital $h_i$, receives a \textbf{prediction $\rho_i$} of the rank of $\mu(h_i)$ on its preference list $L(h_i)$.  
Each hospital can then prune its complete preference list $L(h_i)$ using $\rho_i$ and submit this pruned list to the stable matching algorithm.

We propose two variants of the classic deferred-acceptance algorithm. 
In the first variant, the \textbf{window-truncated deferred-acceptance algorithm (WDA)},  each hospital also receives a prediction error bound $\eta_i$.  This estimate on the error is essentially a promise that its match $\mu(h_i)$ must have rank between $\rho_i-\eta_i$ and  $\rho_i+\eta_i$ on its list.  Thus, the hospitals can prune all other elements, leaving the contiguous sublist $\hat{L}(h_i) := L(h_i)[\rho_i-\eta_i~:~\rho_i+\eta_i]$.  In the second variant, the \textbf{prefix-truncated deferred-acceptance algorithm (PDA)}, the prediction error $\eta_i$ is unknown a priori and the hospitals prune their lists to retain the top $\rho_i$ ranks, that is, the sublist $\hat{L}(h_i) := L(h_i)[1~:~\rho_i]$. In both variants, each resident $r_j$'s list is pruned to remove any hospitals $h_i$ such that $r_j \notin \hat{L}(h_i)$; retaining relative ordering.  We use $\hat{I}$ to denote the resulting pruned stable-matching instance.

\subsection{Our Contributions}

\paragraph{Window-truncated DA.} Our analysis of the window-truncated DA algorithm reveals that robustness of stability depends crucially on which side of the market proposes.  Since, for every hospital $h_i$, the match $\mu(h_i)$ is promised to exist in the hospital's pruned list $\hat{L}(h_i)$, it is tempting to assume that the hospital-proposing DA on the pruned instance $\hat{I}$ is sufficient to output a correct stable matching.  
As our first result, we show this is not the case. In particular, Lemma~\ref{lem:hospital-proposing-is-unstable} shows that if the \emph{hospital-proposing} DA is run on $\hat{I}$, the resulting matching may be \emph{unstable} on the original instance $I$---even if $\mu$ is the output of the hospital-proposing DA on $I$. 
This highlights a market-design vulnerability: the hospital-proposing mechanism can be destabilized by data-driven truncation. This is particularly relevant given the history of the NRMP, which used a hospital-proposing DA in its early years~\citep{roth1984evolution}.

On the other hand, in Theorem \ref{thm:WDA}, we show that the resident-proposing DA, when run on the pruned instance $\hat{I}$, does produce a stable matching.      This provides a new theoretical justification for the NRMP’s design: it remains stable under bounded prediction errors and natural strategic pruning by hospitals. Moreover, we show that the performance of the resident-proposing WDA algorithm is proportional to the prediction error, i.e., it makes $O(\sum_i \eta_i)$ proposals.  Thus, if the prediction quality is good, the WDA algorithm has two major advantages:  it reduces the size of preference lists and the number of proposals made by DA significantly.

\paragraph{Prefix-truncated DA.}  When bounds on the prediction error $\eta_i$ are not available, we propose the prefix-truncated DA algorithm as an alternative.  This truncation strategy captures the realistic setting where hospitals only rank their top $\rho_i$ residents.  We show that running DA on the prefix-truncated instance $\hat{I}$ provides a certificate of which hospitals' predictions were inaccurate.   In particular, Theorem~\ref{thm:unmatched-subseteq-underpredicted} shows that the unmatched hospitals in the matching output by the PDA algorithm are a subset of those whose predictions are inaccurate.  This enables the algorithm to adapt to such erroneous predictions by selectively extending the lists of unmatched hospitals and rerunning until a perfect matching is obtained.  Moreover, when a stable matching is promised to exist within the pruned lists $\hat{L}$, in Theorem~\ref{thm:pda_running_time} we bound the number of proposals made by the PDA algorithm by the distance from the predicted matching to the nearest stable matching. 

\paragraph{Lower bound on efficiency if prediction-based truncation eliminates all stable matchings.}  Our theoretical guarantees on the efficiency of both the WDA and PDA algorithms rely on the assumption that, while the predictions can be quantitatively inaccurate, \emph{at least one stable matching} on the original instance $I$ is promised to exist within the pruned instance $\hat{I}$.  We prove a lower bound that shows that such a promise is in fact necessary for efficiency.  In particular, Theorem~\ref{thm:lower} shows that even with ``near-perfect'' predictions, without this promise, finding a stable matching takes $\Omega(n^2)$ time in the worst case, matching the worst-case performance of the standard DA algorithm.  This lower bound highlights an efficiency bottleneck that is fundamental to the stable matching problem---verifying if a prediction retains a stable solution is just as hard as solving the problem without any predictions.  

\paragraph{Empirical evaluation.}  Finally,  we validate our theoretical results through simulations on several market distributions.  
Our experiments evaluate how WDA and PDA perform on these distributions using empirically learned predictions.  In particular, our experimental results do not assume that the pruned lists generated from the learned predictions retain a stable matching. We observe that even without such a promise, the learned predictions tend to be quite robust.  In particular, the WDA algorithm recovers a stable matching most of the time, and both the WDA and PDA algorithms cut down on the number of proposals significantly over the standard DA algorithm.  Moreover, while the PDA algorithm is guaranteed to always output a stable matching, its instance size (preference list lengths) is a lot larger than the WDA algorithm.  Thus, the WDA algorithm can be an effective heuristic if the market admits very short preference lists.

\paragraph{Recovering intermediate stable matchings.}
The DA algorithm is well-known to be inequitable as it outputs the proposer-optimal and receiver-pessimal stable matching. Even if a given instance may have many intermediate and fairer stable matchings, finding them in an efficient way is an active research area; see e.g.~\citet{dworczak2021deferred,tziavelis2019equitable}.  In particular, \citet{dworczak2021deferred} give the first algorithm that is capable of outputting any intermediate stable matching.  However, their algorithm runs in $O(n^4)$ time.  \citet{tziavelis2019equitable} show that it is possible to output some intermediate matchings in $O(n^2)$ time.  Our paper complements these results by showing that it is possible for DA to output intermediate matchings efficiently by truncating the lists using predictions that include them.

\subsection{Related Work}\label{sec:related}

\paragraph{Stable matchings.} Stable matching  has inspired a vast body of research, including studies on the lattice structure of stable matchings~\citep{blair1984every, blair1988lattice}, extensions with ties and incomplete lists~\citep{irving1994stable, kavitha2007strongly}, and dynamic variants~\citep{bampis2023online, blum2002timing, bredereck2020adapting}. See surveys~\citet{hosseini2024strategic,DBLP:journals/corr/abs-1904-08196} for a comprehensive overview.

Beyond theory, stable matching has significant real-world applications. The NRMP is the most prominent example, but similar resident–hospital matching systems operate in other countries, including CaRMS in Canada and JRMP in Japan. Variants of the model have also been applied to school choice systems \citep{abdulkadirouglu2005new, abdulkadirouglu2005boston, correa2019school}, kidney exchange programs \citep{roth2005pairwise, irving2007cycle}, and labor markets \citep{kelso1982job}. These applications illustrate the  impact of stable matching theory on practical marketplaces.

Stable matchings with incomplete or {partial} preferences have been studied in the context of \emph{random matching markets}, where the preferences of the agents are drawn independently at random. If the proposing side has preference lists consisting of only $O(1)$ hospitals (drawn independently at random using an arbitrary distribution), ~\citet{immorlica2005marriage} show that in such instances the number of candidates with more than one stable partner is a vanishingly small fraction, a phenomenon referred to as ``core convergence.''
They also consider partial lists on the hospital side and prove the following property (restated in~\citet{cai2022short}): if a hospital $h$ truncates its list at position $i$ (removing all residents of rank more than $i$), then $h$ has a stable partner of rank better than $i$ (in the original instance) if and only if $h$ is matched when running resident-proposing deferred-acceptance on the truncated instance. \citet{pittel1992likely} showed that resident preference lists of length $\Theta(\ln^2 n)$ were sufficient to obtain a perfect matching in uniform random matching markets.  This threshold was generalized to unbalanced random matching markets by~\citet{kanoria2021matching} and~\citet{potukuchi2025unbalanced}.  Partial preferences based on public scores was studied in~\citet{agarwal_et_al:LIPIcs.ICALP.2023.8}. 
All prior work on stable matching with partial lists assumes that the preferences follow a given distribution.  Our results generalize this line of work to any learnable preference distribution.

\paragraph{Algorithms with predictions.} Learned predictions have proven extremely effective in improving the algorithmic running time of many graph problems, such as matching~\citep{DinitzILMV21}, minimum cut~\citep{niaparast2025faster}, and flows~\citep{DaviesMVW,POLAK2024106487}, as well as many dynamic graph algorithms and data structures~\citep{McCauleyMNS23, mccauley2024incremental, McCauleyMoNi25, BrandFNP24,HenzingerSSY24,liu2023predicted}.  This paper is the first to leverage learned predictions for finding stable matchings efficiently.
\section{Preliminaries}
We denote the residents by $R = \{r_1,\ldots,r_n\}$ and the hospitals by $H = \{h_1,\ldots,h_n\}$, and refer to $R \sqcup H$ collectively as the set of \emph{agents}. Each agent $a$ has a strict preference list $L(a)$ over all agents of the opposite side. The resulting stable matching instance is $I = (R \sqcup H, L)$.

The rank of agent $a$ on preference list $\ell$ is denoted by $\text{rank}(\ell, a)$, where a higher-ranked agent has a smaller numerical rank. For integers $\rho \leq \rho'$, we use $\ell[\rho:\rho']$ to denote the inclusive sublist of $\ell$ from rank $\rho$ through $\rho'$. 
We use the notation $[n]$ to denote $\{1,2,\ldots, n\}$.

When preference lists \(L\) are not necessarily complete, we say that a matching \(\mu\) \emph{exists} in an instance \(I = (R \sqcup H, L)\) if for every agent \(a\), \(\mu(a) \in L(a)\). We say that \(\mu\) is \emph{eliminated} from \(I\) if there exists an agent \(a\) such that \(\mu(a) \notin L(a)\). 

We make frequent use of the following well-known properties of the DA algorithm~\citep{gusfield1989stable}. 

\smallskip
\begin{theorem}
For any instance of the stable matching problem, the DA algorithm terminates in $O(n^2)$ time, and the matching produced at termination is stable.
\end{theorem}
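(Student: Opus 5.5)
The plan is the standard Gale--Shapley argument, with the proposing side taken to be residents (the other case is symmetric). It has three parts: termination with an $O(n^2)$ bound, the claim that the output is a perfect matching, and the absence of blocking pairs. The whole argument rests on one monotonicity invariant. Each proposer moves monotonically down its list: it never proposes twice to the same receiver, since it proposes only to agents who have not yet rejected it. Dually, once a receiver holds a tentative partner, it stays held, and its tentative partner only improves over time, because it always keeps its most preferred offer among its current partner and the new proposers.

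\textbf{Termination and running time.} Every proposal is a pair $(r,h)$ with $h \in L(r)$, and each pair is used at most once, so there are at most $n^2$ proposals in total. Each round that does not terminate contains at least one proposal. With the natural implementation, each proposal costs $O(1)$ amortized work: a queue of free proposers, a per-proposer pointer into its list, and a precomputed inverse-rank array for each receiver so comparisons take constant time. Building the inverse-rank arrays costs $O(n^2)$. This gives $O(n^2)$ overall.

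\textbf{Perfectness.} I expect this step to need the most care, although it is still routine. Suppose some proposer $r$ is rejected by every $h \in L(r)$. Every $h$ that rejected $r$ holds a partner from that moment on. So at that point all $n$ receivers are held, and they are held by $n$ distinct proposers other than $r$. That is impossible when there are only $n$ proposers. Hence no proposer exhausts its complete list, and the process can only stop when every proposer is tentatively matched. This yields a perfect matching $\mu$.

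\textbf{Stability.} Suppose $(r,h)$ were a blocking pair, so $r$ prefers $h$ to $\mu(r)$. Because $r$ proposes in order of preference, $r$ proposed to $h$ before proposing to $\mu(r)$, and $h$ must have rejected $r$, either immediately or later. At the moment of rejection, $h$ held someone it prefers to $r$. By the invariant, $h$'s final partner $\mu(h)$ is at least as good as that agent, so $h$ prefers $\mu(h)$ to $r$. This contradicts $(r,h)$ being blocking, so $\mu$ is stable.
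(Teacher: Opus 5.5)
Your proof is correct and is the standard Gale--Shapley argument. The paper does not prove this statement itself; it cites it as a well-known property from Gusfield and Irving, and their argument is the one you give. One minor point: your perfectness step relies on complete lists, but the paper also applies this theorem to the truncated instances $\hat{I}$, where perfectness can fail. Your stability argument still goes through there unchanged, provided ``$r$ prefers $h$ to $\mu(r)$'' is read to include the case where $r$ is unmatched after exhausting its list.
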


\smallskip
\begin{theorem}\label{thm:DA-optimality}
For a fixed proposing side, all executions of the deferred acceptance algorithm produce the same stable matching. This matching is optimal for the proposing side, in that each proposer receives its best attainable partner among all stable matchings, and pessimal for the receiving side, in that each receiver obtains its worst feasible partner across all stable matchings.
\end{theorem}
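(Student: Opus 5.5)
We follow the classical argument of Gale and Shapley, as presented by \citet{gusfield1989stable}. Fix residents as the proposing side; the hospital-proposing case is symmetric. Call a hospital $h$ a \emph{valid partner} of resident $r$ if some stable matching of $I$ pairs $r$ with $h$. Let $\mathrm{best}(r)$ be the valid partner of $r$ that $r$ ranks highest. This is well defined because stable matchings exist: by the preceding theorem, DA produces one. The plan has four steps.

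\textbf{Step 1 (key lemma).} In any execution of DA, no resident is ever rejected by a valid partner. Here ``rejected'' covers both an immediate refusal of a proposal and a later displacement from a tentative match. We argue by contradiction, taking the \emph{first} moment in the execution at which some resident $r$ is rejected by a valid partner $h$. At that moment $h$ rejects $r$ because it holds or receives a proposal from some $r'$ that $h$ prefers to $r$. Let $M$ be a stable matching with $M(r)=h$, and note that $M(r')\neq h$.

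\textbf{Step 2 (deriving the contradiction).} Resident $r'$ proposes in decreasing order of preference. So every hospital that $r'$ ranks above $h$ has already rejected $r'$ before this moment. By the minimality of the chosen moment, none of those earlier rejections came from a valid partner of $r'$. Now $M(r')$ is a valid partner of $r'$ and is different from $h$. If $r'$ preferred $M(r')$ to $h$, then $M(r')$ would be among the hospitals that already rejected $r'$, contradicting minimality. Hence $r'$ prefers $h$ to $M(r')$. We also know $h$ prefers $r'$ to $r=M(h)$. So $(r',h)$ is a blocking pair for $M$, contradicting the stability of $M$. This step needs the most care: ``rejection'' must include displacement of a held proposal, and the minimality induction must be set up so that it covers all of $r'$'s earlier rejections.

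\textbf{Step 3 (optimality and uniqueness).} By Step 1, each resident $r$ is never rejected by $\mathrm{best}(r)$. Since $r$ proposes in order of preference, $r$ ends up matched to $\mathrm{best}(r)$ or to someone it ranks higher. The output is stable by the preceding theorem, so its partner for $r$ is itself a valid partner of $r$. Therefore the output pairs $r$ with exactly $\mathrm{best}(r)$. The map $r\mapsto \mathrm{best}(r)$ depends only on $I$, not on the order in which proposals are processed. Hence every execution returns the same matching $M_0$, and $M_0$ gives each proposer its best attainable partner.

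\textbf{Step 4 (receiver-pessimality).} Suppose some stable matching $M$ gives a hospital $h$ a partner that $h$ ranks below its partner in $M_0$. Let $r=M_0(h)$, so $h=\mathrm{best}(r)$ and $h$ prefers $r$ to $M(h)$. Then $M(r)\neq h$ is a valid partner of $r$. By the definition of $\mathrm{best}(r)$, resident $r$ strictly prefers $h$ to $M(r)$. So $(r,h)$ blocks $M$, a contradiction. Hence every hospital receives in $M_0$ its worst partner among all stable matchings.
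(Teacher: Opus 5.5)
Your proof is correct and is the classical Gale--Shapley argument as presented in \citet{gusfield1989stable}: no resident is ever rejected by a valid partner (shown via the first such rejection), followed by proposer-optimality and receiver-pessimality via blocking-pair contradictions; the paper gives no proof of its own and simply cites that source. The paper later applies this theorem to pruned instances with incomplete lists, and your argument carries over there with small changes: let an unmatched agent prefer any acceptable partner to being single (so an $M$-unmatched $r'$ in Step~2, or $r$ in Step~4, still forms a blocking pair), and define $\mathrm{best}(r)$ only for residents that have a valid partner (the rest end unmatched because the output is stable).
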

\section{Window-Truncated DA Algorithm}\label{sec:WDA}
This section studies the idea of truncating each preference list to a contiguous window that is likely to contain a stable match.

Let $\mu$ be any stable matching on a given instance $I$.  Each hospital $h_i$ receives a prediction~$\rho_i$ of the rank $\mu(h_i)$ on its preference list $L(h_i)$, along with an error bound $\eta_i$. 
Recall that for each hospital $h_i$, $\hat{L}(h_i) := L(h_i)[\rho_i-\eta_i:\rho_i+\eta_i]$ denotes the contiguous sublist containing residents within the error bound of its predicted match. 

\subsection{Instability in the Hospital-Proposing DA}

We first show that the natural approach of having hospitals propose only to agents within their predicted windows does not necessarily yield a stable matching, even if one exists within these windows.

\begin{lemma}\label{lem:hospital-proposing-is-unstable}
When hospitals propose only to residents within their predicted windows, the procedure does not always produce a stable assignment, even if one exists within those windows.
\end{lemma}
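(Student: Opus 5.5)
The plan is to prove the lemma with one explicit small counterexample. The mechanism is as follows. Any pair that survives pruning is still a pair of $\hat I$, so a blocking pair of $I$ that is absent from $\hat I$ must be one pruned away. This happens when a hospital's window excludes a resident that the hospital ranks above its window. Pruning that pair removes the competitive pressure which, in $I$, forces the hospital-proposing DA to stop at $\mu$. On $\hat I$ the hospitals can then reach a matching $\nu$ that they like better, which is stable in $\hat I$ but blocked in $I$ by the removed pair. To make the statement as strong as the paper claims, I will take $\mu$ to be the hospital-proposing DA output on $I$ itself.

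The instance has $n=3$ and the following preference lists:
\[
L(h_1)=(r_2,r_1,r_3),\quad L(h_2)=(r_1,r_2,r_3),\quad L(h_3)=(r_1,r_3,r_2),
\]
\[
L(r_1)=(h_1,h_3,h_2),\quad L(r_2)=(h_2,h_1,h_3),\quad L(r_3)=(h_3,h_1,h_2).
\]
First I run hospital-proposing DA on $I$, step by step.
\begin{enumerate}[label=(\roman*)]
\item $h_1$ proposes to $r_2$, $h_2$ to $r_1$, and $h_3$ to $r_1$. Since $r_1$ prefers $h_3$ to $h_2$, it rejects $h_2$.
\item $h_2$ proposes to $r_2$, which prefers $h_2$ to $h_1$ and rejects $h_1$.
\item $h_1$ proposes to $r_1$, which prefers $h_1$ to $h_3$ and rejects $h_3$.
\item $h_3$ proposes to $r_3$.
\end{enumerate}
The output is $\mu=\{(h_1,r_1),(h_2,r_2),(h_3,r_3)\}$, and it is stable by Theorem~\ref{thm:DA-optimality}. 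In $\mu$, $h_1$ and $h_2$ hold rank $2$ and $h_3$ holds rank $2$.

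Next I choose the predictions. Take $\rho_1=\rho_2=2$ and $\eta_1=\eta_2=1$, so the windows of $h_1,h_2$ are their full lists. Take $\rho_3=2$ and $\eta_3=0$, so $\hat L(h_3)=\{r_3\}$. Every window contains the hospital's $\mu$-partner, so $\mu$ exists in $\hat I$. The only pruned pair is $(h_3,r_1)$ together with $(h_3,r_2)$, and the residents' lists are shortened accordingly.

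Now I run hospital-proposing DA on $\hat I$. In the first round $h_1$ proposes to $r_2$, $h_2$ to $r_1$, and $h_3$ to $r_3$. There are no conflicts, so the algorithm terminates with $\nu=\{(h_1,r_2),(h_2,r_1),(h_3,r_3)\}$. Finally I check that $\nu$ is unstable on $I$: $h_3$ prefers $r_1$ to $r_3$, and $r_1$ prefers $h_3$ to $\nu(r_1)=h_2$. So $(h_3,r_1)$ is a blocking pair.

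The only real obstacle was finding the instance. A swap that is good for the hospitals must be blocked in $I$ solely through the pruned pair, while that same pair must fail to block $\mu$. I handled this by having $r_1$ rank $h_3$ strictly between its two partners, $h_1$ in $\mu$ and $h_2$ in $\nu$. Everything else is routine checking.
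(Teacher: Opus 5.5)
Your counterexample is correct, and it takes the same approach as the paper: a single explicit instance in which hospital-proposing DA on $\hat I$ returns a matching blocked in $I$ by a pruned pair, namely $(h_3,r_1)$ with $r_1$ lying above $h_3$'s window. The difference is only in the instance. The paper uses a $4\times 4$ example with exact predictions and a uniform bound $\eta_i=1$. Yours is $3\times 3$ and relies on heterogeneous bounds ($\eta_3=0$, with $h_1,h_2$ effectively untruncated), which the model permits since each hospital receives its own $\eta_i$.
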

\begin{proof}
Consider the instance shown in Figure \ref{fig:counterexample}. It has a unique stable matching
\[\mu = \{(r_1,h_2), (r_2,h_4), (r_3,h_1), (r_4,h_3)\},\] 
drawn in black.  

\begin{figure}[H]
\begin{center}
    \begin{tikzpicture}[>= stealth, thick,
     node/.style = {circle, draw = black!70, fill = black!10, minimum size = 6mm, inner sep = 0pt, text width = 4mm, align = center},
     rnode/.style = {node}, hnode/.style = {node}]
    
    \node[rnode] (r1) {$r_1$};
    \node[rnode,below=0.25cm of r1] (r2) {$r_2$};
    \node[rnode,below=0.25cm of r2] (r3) {$r_3$};
    \node[rnode,below=0.25cm of r3] (r4) {$r_4$};
    
    \node[hnode,right=3cm of r1] (h1) {$h_1$};
    \node[hnode,below=0.25cm of h1] (h2) {$h_2$};
    \node[hnode,below=0.25cm of h2] (h3) {$h_3$};
    \node[hnode,below=0.25cm of h3] (h4) {$h_4$};
    
    \node[right=0.2cm of h1] {\small $[\grayhl{r_2},\grayhl{r_3},\grayhl{r_1},r_4]$};
    \node[right=0.2cm of h2] {\small $[r_3,\grayhl{r_2},\grayhl{r_1},\grayhl{r_4}]$};
    \node[right=0.2cm of h3] {\small $[\grayhl{r_4},\grayhl{r_1},{r_3},r_2]$};
    \node[right=0.2cm of h4] {\small $[\grayhl{r_3},\grayhl{r_2},\grayhl{r_4},r_1]$};
    
    \node[left=0.2cm of r1] {\small $[h_1,h_4,h_2,h_3]$};
    \node[left=0.2cm of r2] {\small $[h_4,h_3,h_1,h_2]$};
    \node[left=0.2cm of r3] {\small $[h_1,h_3,h_2,h_4]$};
    \node[left=0.2cm of r4] {\small $[h_1,h_4,h_2,h_3]$};
    
    \draw[black!70,ultra thick] (r1) -- (h2);
    \draw[black!70,ultra thick] (r2) -- (h4);
    \draw[black!70,ultra thick] (r3) -- (h1);
    \draw[black!70,ultra thick] (r4) -- (h3);
    
    \draw[gray!60] (r1) to [bend left = 20] (h2);
    \draw[gray!60] (r2) to [bend left = 20] (h1);
    \draw[gray!60] (r3) to [bend right = 20] (h4);
    \draw[gray!60] (r4) to [bend right = 20] (h3);
    \end{tikzpicture}
\end{center}
\caption{A counterexample showing that truncating preference lists of proposers can fail.}
\label{fig:counterexample}
\end{figure}
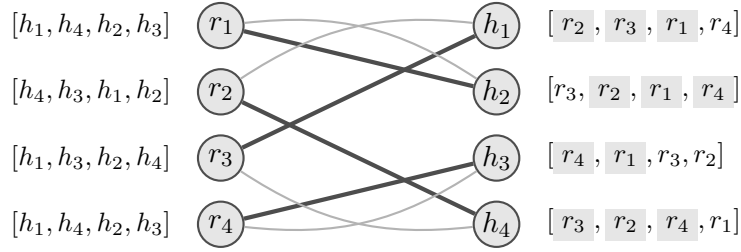

Suppose each hospital's predicted partner is exactly its match in $\mu$, and  each $\eta_i =1$. Each hospital’s pruned list is highlighted in gray in the figure.

If hospitals propose only to residents within their predicted windows, the algorithm outputs the matching
\[
\{(r_1,h_2), (r_2,h_1), (r_3,h_4), (r_4,h_3)\}.
\]  
This matching is not stable in the original instance because $(r_3,h_2)$ is a blocking pair.
\end{proof}

\subsection{Resident-Proposing Window-Truncated DA}\label{sec:wda_algorithm}
The above negative result motivates a more careful pruning scheme using resident-proposing DA. We call this the \textbf{window-truncated DA (WDA)} algorithm and prove that it always produces a stable matching whenever one exists within the pruned instance. 

Let $\hat{L}(h_i)$ denote the hospitals' pruned lists as before, i.e., 
$\hat{L}(h_i) := L(h_i)[\rho_i-\eta_i:\rho_i+\eta_i].$
For each resident $r_j$, let $\hat{L}(r_j)$ denote the set of hospitals $h_i \in L(r_j)$ for which $r_j \in \hat{L}(h_i)$, listed in the same order as in $L(r_j)$.

Now, run the DA algorithm on the new instance ${\hat{I} = (R \sqcup H, \hat{L})}$, with the residents proposing, to get a matching $\hat{\mu}$. 
We can immediately observe that the number of proposals of this algorithm is~$O(\sum_{i \in [n]} \eta_i)$.

We prove that WDA is correct: it will always give a stable matching if one exists in the predicted windows.
Suppose there exists a stable matching $\mu$ in $I$ satisfying $\mu(h_i) \in \hat{L}(h_i)$ for every $i \in [n]$. We begin by proving the following simple lemmas. 

\begin{lemma}\label{lem:WDA:mu-stable-in-hatI}
    The matching $\mu$ is stable in $\hat{I}$. 
\end{lemma}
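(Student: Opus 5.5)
The plan has two parts: show that $\mu$ is a valid matching of $\hat{I}$, and then show that it has no blocking pair there. Both follow from the hypothesis that $\mu(h_i)\in\hat{L}(h_i)$ for every $i\in[n]$, together with the fact that pruning only removes edges and preserves relative order.

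\textbf{Step 1 ($\mu$ exists in $\hat{I}$).} Fix a pair $(r_j,h_i)\in\mu$. By hypothesis $r_j=\mu(h_i)\in\hat{L}(h_i)$. By the definition of the residents' pruned lists, $h_i\in\hat{L}(r_j)$ exactly when $r_j\in\hat{L}(h_i)$, so $h_i\in\hat{L}(r_j)$ as well. Hence $\mu(a)\in\hat{L}(a)$ for every agent $a$. Since $\mu$ is perfect in $I$, it is a perfect matching that exists in $\hat{I}$.

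\textbf{Step 2 (no blocking pair).} Suppose for contradiction that $(r,h)$ blocks $\mu$ in $\hat{I}$. A blocking pair in an instance with incomplete lists must be mutually acceptable, so $h\in\hat{L}(r)$ and $r\in\hat{L}(h)$. It must also satisfy:
\begin{itemize}
\item $r$ prefers $h$ to $\mu(r)$ with respect to $\hat{L}(r)$;
\item $h$ prefers $r$ to $\mu(h)$ with respect to $\hat{L}(h)$.
\end{itemize}
All agents are matched in $\mu$, so these preferences compare against actual partners rather than being unmatched. Each pruned list $\hat{L}(a)$ is a sublist of $L(a)$ in the same relative order: a contiguous window for hospitals and an order-preserving subset for residents. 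Therefore the same two strict preferences hold under $L(r)$ and $L(h)$, and $(r,h)$ would block $\mu$ in $I$. This contradicts the stability of $\mu$ in $I$.

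The only point needing care is the convention for blocking pairs under incomplete lists. We should state explicitly that a blocking pair in $\hat{I}$ must be mutually acceptable, and that because $\mu$ is perfect in $\hat{I}$, no "unmatched agent prefers any acceptable partner" case arises. Everything else is a direct transfer of preferences from $\hat{L}$ back to $L$, so I expect no real obstacle.
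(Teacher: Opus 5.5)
Your proof is correct and follows the same route as the paper: since $\mu$ lies within the pruned lists and each $\hat{L}(a)$ preserves the relative order of $L(a)$, any blocking pair for $\mu$ in $\hat{I}$ would also block $\mu$ in $I$. Your Step 1, which checks $\mu(r)\in\hat{L}(r)$ via the symmetric definition of the residents' pruned lists, and your remark that perfectness rules out the unmatched-agent case, make explicit what the paper's one-line argument takes for granted. Perfectness holds because every hospital has a partner $\mu(h_i)$ and $|H|=|R|$.
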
 
\begin{proof}
Since $\mu$ lies within $\hat{L}$, and the elements in $\hat{L}$ have the same relative order in $I$ and $\hat{I}$, any blocking pair with respect to $\mu$ in $\hat{I}$ is also a blocking pair in $I$. Therefore, since $\mu$ is stable in $I$, it is also stable in~$\hat{I}$. 
\end{proof}

\begin{lemma}\label{lem:WDA:residents-trade-up}
    For any resident $r$, 
   $\text{rank}(\hat{L}(r), \hat{\mu}(r)) \leq \text{rank}(\hat{L}(r), \mu(r)).$ 
\end{lemma}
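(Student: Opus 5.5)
By Lemma~\ref{lem:WDA:mu-stable-in-hatI}, $\mu$ is a stable matching of the pruned instance $\hat{I}$. The matching $\hat{\mu}$ is the output of resident-proposing DA on $\hat{I}$. So the plan is to apply Theorem~\ref{thm:DA-optimality} to $\hat{I}$ itself, with the residents as the proposing side. That theorem says $\hat{\mu}$ is resident-optimal among all stable matchings of $\hat{I}$: every resident $r$ weakly prefers $\hat{\mu}(r)$ to its partner in any stable matching of $\hat{I}$. Taking that matching to be $\mu$ gives $\text{rank}(\hat{L}(r), \hat{\mu}(r)) \leq \text{rank}(\hat{L}(r), \mu(r))$.

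Before invoking the theorem, I would check that it applies in our setting. The pruned lists $\hat{L}$ are incomplete, while Theorem~\ref{thm:DA-optimality} is stated for the general instance. I would note that the standard proof by \citet{gale1962college} and \citet{gusfield1989stable} carries over verbatim to incomplete lists. There a matching may leave agents unmatched, and stability additionally rules out a mutually acceptable pair in which each agent is unmatched or prefers the other to its partner.

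Two further points need checking. First, $\mu$ is perfect and exists in $\hat{I}$. For each $i$ we have $\mu(h_i) \in \hat{L}(h_i)$, and hence $h_i \in \hat{L}(\mu(h_i))$ by the construction of the residents' pruned lists. So $\mu(r)$ has a well-defined rank on $\hat{L}(r)$ for every $r$. Second, $\hat{\mu}(r)$ must also be defined. Here I would use the Rural Hospitals theorem: all stable matchings of $\hat{I}$ match the same set of agents, so $\hat{\mu}$ is perfect as well. Alternatively, this follows directly from resident-optimality. Since $r$ is matched in the stable matching $\mu$ of $\hat{I}$, $r$ has a stable partner in $\hat{I}$, and the resident-optimal matching gives $r$ its best stable partner. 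Hence $r$ is matched in $\hat{\mu}$ to someone it ranks at least as highly as $\mu(r)$.

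I expect the main obstacle to be this justification that the proposer-optimality statement holds for incomplete lists. If a self-contained argument is wanted, I would reprove it directly by the classic induction. Suppose some resident is rejected by a hospital $h$ with which it is matched in some stable matching of $\hat{I}$, and consider the first time this happens during the run of DA on $\hat{I}$. Let $r$ be the rejected resident and $\mu'$ that stable matching, so $\mu'(r)=h$. Then $h$ holds, or accepts, some $r'$ that it prefers to $r$. Because this is the first such rejection, $r'$ has not yet been rejected by any of its stable partners, so $r'$ prefers $h$ to $\mu'(r')$ (or $r'$ is unmatched in $\mu'$). Thus $(r',h)$ blocks $\mu'$ in $\hat{I}$, a contradiction. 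Therefore no resident is ever rejected by its partner in $\mu$, which yields the claimed rank inequality.
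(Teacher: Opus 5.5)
Your proposal is correct and follows the paper's proof: combine Lemma~\ref{lem:WDA:mu-stable-in-hatI} with the resident-optimality of resident-proposing DA on $\hat{I}$ (Theorem~\ref{thm:DA-optimality}). Your extra checks fill in details the paper leaves implicit: that optimality carries over to incomplete lists, that $\mu$ exists in $\hat{I}$, and that $\hat{\mu}(r)$ is defined.
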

\begin{proof}
The proof follows from Lemma \ref{lem:WDA:mu-stable-in-hatI} together with the fact that $\hat{\mu}$ is the resident-optimal stable matching in $\hat{I}$, as given by Theorem \ref{thm:DA-optimality}.
\end{proof}

We are now ready to show that WDA correctly finds a stable matching.

\begin{theorem}\label{thm:WDA}
Suppose there exists a stable matching $\mu$ in $I$ with $\mu(h_i)$ contained in the predicted window of $h_i$ for every hospital. Then the WDA algorithm produces a matching $\hat{\mu}$ that is stable in the original instance and matches all residents.
\end{theorem}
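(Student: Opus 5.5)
We will first show that $\hat{\mu}$ is a perfect matching, and then show that it has no blocking pair in $I$. The proof relies on Lemma~\ref{lem:WDA:mu-stable-in-hatI}, Lemma~\ref{lem:WDA:residents-trade-up}, and the well-known Rural Hospitals property.

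\textbf{Step 1: perfectness.} Since $\mu$ is stable in $\hat{I}$ (Lemma~\ref{lem:WDA:mu-stable-in-hatI}) and $\mu$ is perfect, every agent has a partner in $\hat{I}$ under some stable matching. For stable matchings on instances with incomplete lists, the set of matched agents is the same in every stable matching (Rural Hospitals theorem; see \citet{gusfield1989stable}). Hence $\hat{\mu}$ also matches all residents and all hospitals.

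If we prefer to avoid citing that theorem, we can argue directly. By Lemma~\ref{lem:WDA:residents-trade-up}, every resident $r$ is matched in $\hat{\mu}$ to someone weakly better than $\mu(r)$. In particular $r$ is matched, since $\mu(r)\in\hat{L}(r)$ and a resident left unmatched would have been rejected by $\mu(r)$. Since there are $n$ residents and $n$ hospitals, $\hat{\mu}$ is perfect.

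\textbf{Step 2: stability in $I$.} Suppose, for contradiction, that $(r,h)$ blocks $\hat{\mu}$ in $I$. Then $r$ prefers $h$ to $\hat{\mu}(r)$, and $h$ prefers $r$ to $\hat{\mu}(h)$. We split into two cases according to whether $r\in\hat{L}(h)$.

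\emph{Case 1: $r\in\hat{L}(h)$.} Then $h\in\hat{L}(r)$ as well, and relative orders agree with those in $I$. So $(r,h)$ blocks $\hat{\mu}$ in $\hat{I}$, contradicting the stability of the DA output on $\hat{I}$.

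\emph{Case 2: $r\notin\hat{L}(h)$.} Since $h$ prefers $r$ to $\hat{\mu}(h)\in\hat{L}(h)$ and $\hat{L}(h)$ is a contiguous window, $r$ must lie strictly above the window. In particular, $h$ prefers $r$ to every element of the window, and hence to $\mu(h)$.

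We next need that $r$ prefers $h$ to $\mu(r)$. We know $r$ prefers $h$ to $\hat{\mu}(r)$. By Lemma~\ref{lem:WDA:residents-trade-up}, $r$ weakly prefers $\hat{\mu}(r)$ to $\mu(r)$ within $\hat{L}(r)$. Both hospitals lie in $\hat{L}(r)$, and the order there is the restriction of $L(r)$, so the same comparison holds in $L(r)$. By transitivity, $r$ prefers $h$ to $\mu(r)$. Therefore $(r,h)$ blocks $\mu$ in $I$, contradicting the stability of $\mu$.

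The main obstacle is Case 2: it is the only place where pairs pruned out of $\hat{I}$ must be handled. It is exactly where the contiguity of the window and the resident-proposing direction are both used. Contiguity places $r$ above the window, so $h$ prefers $r$ to $\mu(h)$. Resident-optimality, through Lemma~\ref{lem:WDA:residents-trade-up}, transfers $r$'s preference from $\hat{\mu}(r)$ down to $\mu(r)$. This is also where the hospital-proposing version breaks, consistent with Lemma~\ref{lem:hospital-proposing-is-unstable}.
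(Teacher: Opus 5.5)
Your proof is correct and follows essentially the paper's route. Both arguments turn a blocking pair $(r,h)$ of $\hat{\mu}$ in $I$ into either a blocking pair of $\hat{\mu}$ in $\hat{I}$ or a blocking pair of $\mu$ in $I$, using Lemma~\ref{lem:WDA:residents-trade-up} and the contiguity of the window. The differences are minor: you split on whether $r\in\hat{L}(h)$ rather than on whether $h$ prefers $r$ to $\mu(h)$, which lets you skip the hospital-pessimality fact the paper uses. You also prove perfectness first rather than at the end, which makes explicit that $\hat{\mu}(h)$ exists when Case~2 compares $r$ against it.
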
 

\begin{proof}
Suppose $\hat{\mu}$ is not stable with respect to $I$.  Let $(r, h)$ be a blocking pair, that is, $r$ prefers $h$ to $\hat{\mu}(r)$ and $h$ prefers $r$ to $\hat{\mu}(h)$.  Using Lemma~\ref{lem:WDA:residents-trade-up},  we know that $r$ prefers $h$ to $\mu(r)$ as well, since residents only upgrade in $\hat{\mu}$.  An analog of Lemma~\ref{lem:WDA:residents-trade-up} for hospitals shows that hospitals only downgrade in $\hat{\mu}$, which is hospital-pessimal on $\hat{I}$.  Thus, $h$ prefers $\mu(h)$ to $\hat{\mu}(h)$.

There are two cases with respect to $r$'s position on $h$'s list in $I$.  
Suppose $h$ prefers $r$ to $\mu(h)$; then $(r, h)$ is also a blocking pair with respect to $\mu$ in $I$, which is a contradiction.  In the second case, $h$ prefers $r$ less than $\mu(h)$ but more than $\hat{\mu}(h)$.  This implies that $h \in \hat{L}(r)$, and that $(r,h)$ is a blocking pair with respect to $\hat{\mu}$ in $\hat{I}$, which is a contradiction.

Thus, $\hat{\mu}$ is stable with respect to $I$. 
Since all residents were matched in $\mu$, this implies that all residents are also matched in $\hat{\mu}$, since otherwise, the unmatched resident and hospital would form a blocking pair in $I$.
\end{proof}  

\paragraph{Discussion.} 
Notice that if $\mu$ is the resident-optimal stable matching in $I$, then $\hat{\mu} = \mu$. For an arbitrary stable matching $\mu$ in $I$, however, the algorithm may recover an intermediate stable matching, provided it lies within the pruned instance.
\section{Prefix-Truncated DA Algorithm}\label{sec:PDA}
In this section, we present a variant of the algorithm from Section \ref{sec:WDA} that eliminates the need for knowing the prediction errors a priori.   This algorithm prunes the preference lists of the hospitals after a certain rank~$\rho_i$ and retains the prefix of the original preferences up to $\rho_i$.
We call this variant the \textbf{prefix-truncated deferred-acceptance (PDA)} algorithm.

For each hospital $h_i$, let $\hat{L}(h_i) := L(h_i)[1:\rho_i]$.
As before, for each resident $r_j$, let $\hat{L}(r_j)$ consist of only those hospitals $h_i \in L(r_j)$ for which $r_j \in \hat{L}(h_i)$, in the same order as in $L(r_j)$.
We run the resident-proposing DA algorithm on the pruned instance ${\hat{I} = (R \sqcup H, \hat{L})}$ to obtain a matching $\hat{\mu}$. 

We show that if a perfect stable matching $\mu$ from the original instance $I$ exists within $\hat{I}$, then every hospital is matched in $\hat{\mu}$, and this matching is also stable in $I$. 

To show this, first we state the following observations.

\begin{observation}\label{lem:mu-stable-in-I'}
   If $\mu$ is a stable matching in $I$ that exists in $\hat{I}$, then $\mu$ is also stable in $\hat{I}$. 
\end{observation}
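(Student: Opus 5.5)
The plan is to argue by contradiction, following the proof of Lemma~\ref{lem:WDA:mu-stable-in-hatI} almost verbatim. The only facts needed are that $\mu$ lies within $\hat{L}$ and that pruning preserves relative order.

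First, we check that $\mu$ is a valid matching in $\hat{I}$ in which every agent is matched to someone on its pruned list. Since $\mu$ exists in $\hat{I}$, we have $\mu(a) \in \hat{L}(a)$ for every agent $a$. For hospitals this is exactly the hypothesis. For a resident $r$ with $\mu(r) = h$, membership $h \in \hat{L}(r)$ follows from $r \in \hat{L}(h)$ and the definition of $\hat{L}(r)$. The two conditions are therefore consistent, and no agent of $\hat{I}$ is left unmatched by $\mu$.

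Next, suppose toward a contradiction that $(r,h)$ blocks $\mu$ in $\hat{I}$. Then $h \in \hat{L}(r)$ and $r \in \hat{L}(h)$. Moreover, $r$ ranks $h$ above $\mu(r)$ in $\hat{L}(r)$, and $h$ ranks $r$ above $\mu(h)$ in $\hat{L}(h)$.

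The pruned lists are sublists of the original lists with the same relative order: $\hat{L}(h) = L(h)[1:\rho_h]$ is a prefix of $L(h)$, and $\hat{L}(r)$ is defined as a subsequence of $L(r)$. Hence all four agents involved appear in the original lists with the same comparisons. So $r$ prefers $h$ to $\mu(r)$ in $L(r)$, and $h$ prefers $r$ to $\mu(h)$ in $L(h)$. This makes $(r,h)$ a blocking pair for $\mu$ in $I$, contradicting the stability of $\mu$ in $I$.

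There is no real obstacle here. The one point needing care is that in $\hat{I}$, with possibly incomplete lists, a blocking pair must consist of mutually acceptable agents, and that the notion of being unmatched does not introduce new blocking pairs. Both are handled by the first step: every agent is matched under $\mu$ to an acceptable partner, so only pairs of the form considered above can block.
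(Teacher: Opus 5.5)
Your proof is correct and is essentially the paper's argument: a blocking pair for $\mu$ in $\hat{I}$ would, because pruning preserves relative order, also block $\mu$ in $I$. Your preliminary check that every agent is matched under $\mu$ to an acceptable partner in $\hat{I}$, so incomplete lists create no extra blocking pairs, is a small point of care that the paper leaves implicit.
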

\begin{proof}
    For the sake of contradiction, suppose a blocking pair $(r,h)$ with respect to $\mu$ exists in $\hat{I}$. Then $r$ is ranked higher than $\mu(h)$ on $\hat{L}(h)$ and $h$ is ranked higher than $\mu(r)$ on $\hat{L}(r)$. 
    The elements in $\hat{L}$ have the same relative order in $I$ and $\hat{I}$. Hence, $(r,h)$ is also a blocking pair in $I$, which contradicts stability of~$\mu$ in $I$. 
\end{proof}

\begin{observation}\label{lem:stable-in-hatI-implies-stable-in-I}
    Every stable matching in $\hat{I}$ is also stable in $I$.
\end{observation}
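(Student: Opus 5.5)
The plan is a direct contrapositive argument. Let $\nu$ be any stable matching in $\hat{I}$, and suppose for contradiction that $(r,h)$ is a blocking pair for $\nu$ in $I$. I will show that $(r,h)$ is then also a blocking pair for $\nu$ in $\hat{I}$. The argument uses only two facts. First, $\hat{L}$ preserves the relative order of $L$. Second, each hospital's pruned list is a \emph{prefix} $L(h)[1:\rho]$ of its full list; the window truncation of Section~\ref{sec:WDA} does not have this prefix property.

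\textbf{Step 1 (the hospital side).} Since $(r,h)$ blocks in $I$, $h$ prefers $r$ to $\nu(h)$. Assume first that $h$ is matched in $\nu$. Then $\nu(h)\in\hat{L}(h)=L(h)[1:\rho]$, so $\text{rank}(L(h),\nu(h))\le\rho$. Because $r$ is ranked strictly above $\nu(h)$ on $L(h)$, we also get $\text{rank}(L(h),r)\le\rho$, hence $r\in\hat{L}(h)$. By the definition of the residents' pruned lists, this gives $h\in\hat{L}(r)$.

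\textbf{Step 2 (the resident side).} By assumption $r$ prefers $h$ to $\nu(r)$ in $L(r)$, or $r$ is unmatched. If $\nu(r)$ exists, it lies in $\hat{L}(r)$, and $\hat{L}(r)$ keeps the order of $L(r)$. So $r$ also prefers $h$ to $\nu(r)$ on $\hat{L}(r)$. Likewise, $h$ prefers $r$ to $\nu(h)$ on $\hat{L}(h)$. Therefore $(r,h)$ is an admissible pair in $\hat{I}$ that blocks $\nu$ there, contradicting the stability of $\nu$ in $\hat{I}$.

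\textbf{Main obstacle: hospitals left unmatched by $\nu$.} If $h$ is unmatched, Step 1 breaks down. In that case $h$ would happily take a resident $r$ beyond position $\rho$, and such a pair is invisible in $\hat{I}$. I would handle this in one of two ways, following whichever convention the paper intends:
\begin{itemize}
\item Read the observation as saying that every blocking pair for $\nu$ in $I$ involves an unmatched hospital. In that reading, a matching that is stable in $\hat{I}$ and matches every hospital is stable in $I$.
\item Treat the truncated list as $h$'s acceptability set, so that residents past rank $\rho$ are deemed unacceptable.
\end{itemize}
The first reading is exactly what the subsequent argument needs. There one shows that the PDA output $\hat{\mu}$ is perfect whenever a perfect stable matching exists in $\hat{I}$, since the set of matched agents is the same across all stable matchings of $\hat{I}$. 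Combined with the two steps above, this yields that $\hat{\mu}$ is stable in $I$.
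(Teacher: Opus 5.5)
Your Steps 1--2 are exactly the paper's argument. Since $\nu(h)\in\hat{L}(h)=L(h)[1:\rho]$ and $r$ sits above $\nu(h)$, the prefix property gives $r\in\hat{L}(h)$, hence $h\in\hat{L}(r)$, and order preservation then makes $(r,h)$ block in $\hat{I}$.

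Your caveat about unmatched hospitals is correct, and it exposes a gap in the statement rather than in your method. The paper's proof tacitly assumes $h$ is matched when it writes $\tilde{\mu}(h)\in\hat{L}(h)$, and the literal claim fails otherwise. An unmatched hospital forces an unmatched resident because $|R|=|H|$, and with complete lists in $I$ those two form a blocking pair. So your first reading is the right formulation. The acceptability-set reading is not, since it conflicts with the paper treating $I$ as having complete lists. The first reading also covers every use of the observation in the paper, because there the matchings in question are perfect by the Lone-Wolf theorem.
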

\begin{proof}
    Consider a stable matching $\tilde{\mu}$ in $\hat{I}$, and, for sake of contradiction, suppose $(r,h)$ is a blocking pair with respect to $\tilde{\mu}$ in $I$. Since $\tilde{\mu}(h) \in \hat{L}(h)$, and $\hat{L}(h)$ is a prefix of $L(h)$, then ${r \in \hat{L}(h)}$ as well. Therefore, by definition, $h \in \hat{L}(r)$. Again, since the $\hat{L}$ sections of the preference lists have the same relative order in $I$ and $\hat{I}$, then $(r,h)$ is also a blocking pair with respect to $\tilde{\mu}$ in $\hat{I}$, which is a contradiction.  
\end{proof}

Finally,  we restate and use a well-known property of stable matchings, commonly referred to as ``the lone-wolf theorem'' or ``the rural hospital theorem.''
\begin{theorem}[Lone-Wolf Theorem~\citep{mcvitie1971stable}]\label{thm:lone-wolf}
Given a stable matching instance $I$, if an agent is unmatched in some stable matching $\mu$ over $I$, it must be unmatched in all stable matchings over $I$.
\end{theorem}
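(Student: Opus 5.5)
The plan is the standard proof of the Lone-Wolf Theorem (Theorem~\ref{thm:lone-wolf}), which works for instances with incomplete lists such as $\hat{I}$. The idea is to compare an arbitrary stable matching with the resident-optimal one. Fix any stable matching $\mu$ on $I$, and let $\mu_R$ be the resident-optimal stable matching produced by resident-proposing DA. By Theorem~\ref{thm:DA-optimality}, every resident weakly prefers $\mu_R$ to $\mu$, where being matched is preferred to being unmatched. Every hospital weakly prefers $\mu$ to $\mu_R$.

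\textbf{Step 1 (residents).} Any resident matched in $\mu$ is also matched in $\mu_R$, since in $\mu_R$ it gets a partner at least as good. So $R_\mu \subseteq R_{\mu_R}$, where $R_\nu$ is the set of matched residents in $\nu$.

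\textbf{Step 2 (hospitals).} Any hospital matched in $\mu_R$ is also matched in $\mu$, by receiver-pessimality. So $H_{\mu_R} \subseteq H_\mu$.

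\textbf{Step 3 (counting).} Each matching pairs residents with hospitals bijectively, so $|R_\nu| = |H_\nu|$ for every matching $\nu$. This gives the chain
\[
|R_\mu| \le |R_{\mu_R}| = |H_{\mu_R}| \le |H_\mu| = |R_\mu|.
\]
All the inequalities are therefore equalities. Combined with the inclusions, this gives $R_\mu = R_{\mu_R}$ and $H_\mu = H_{\mu_R}$.

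So every stable matching matches exactly the same agents as $\mu_R$. Hence an agent unmatched in some stable matching is unmatched in all of them.

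The main obstacle is justifying Step 2 directly, because Theorem~\ref{thm:DA-optimality} is stated for partners rather than for being matched versus unmatched. I would argue it by contradiction. Suppose $h$ is matched to $r=\mu_R(h)$ but unmatched in $\mu$. Since $h$ is unmatched in $\mu$, $h$ accepts $r$ over nothing. Stability of $\mu$ then forces $r$ to strictly prefer $\mu(r)$ to $h$; otherwise $(r,h)$ would block $\mu$. But $r$ weakly prefers $\mu_R(r)=h$ to $\mu(r)$ by resident-optimality, and preferences are strict, so this is a contradiction. Step 1 is immediate from resident-optimality.
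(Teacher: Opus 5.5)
The paper gives no proof of this statement. It cites the result as a classical theorem of McVitie and Wilson and uses it as a black box, mainly to obtain Theorem~\ref{thm:underprediction-gives-imperfect-matching} on the truncated instance $\hat{I}$. So there is no argument of the paper's to compare yours against.

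Your argument is the standard textbook proof, and it is correct.
\begin{itemize}
\item Step~1 is exactly proposer-optimality.
\item Your contradiction argument for Step~2 correctly handles the matched-versus-unmatched case, which the receiver-pessimality clause of Theorem~\ref{thm:DA-optimality} does not literally address. It uses only three facts: stability of $\mu$, mutual acceptability of $h$ and $\mu_R(h)$, and resident-optimality applied to $r=\mu_R(h)$.
\item The counting chain then forces $R_\mu = R_{\mu_R}$ and $H_\mu = H_{\mu_R}$. This is a slightly stronger conclusion than the theorem states: every stable matching matches exactly the same set of agents.
\end{itemize}

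Compared with the paper's bare citation, your proof is self-contained given Theorem~\ref{thm:DA-optimality}. It does rely on that theorem holding for instances with incomplete but mutually consistent lists. That covers existence of a stable matching via DA and the fact that no proposer is ever rejected by a stable partner. The paper already applies Theorem~\ref{thm:DA-optimality} to $\hat{I}$ in the same way, so this is consistent with its usage.

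You are also right that the content lies entirely in the incomplete-list case. For the complete, balanced instance $I$ of the preliminaries, every stable matching is perfect, so the statement holds vacuously there.
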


The following theorem can be derived from Observations~\ref{lem:mu-stable-in-I'} and~\ref{lem:stable-in-hatI-implies-stable-in-I}, together with Theorem \ref {thm:lone-wolf}.
\begin{theorem}\label{thm:underprediction-gives-imperfect-matching}
Let $\mu$ be any perfect stable matching with respect to $I$.  
If $\mu$ exists in $\hat{I}$, then $\hat{\mu}$ is perfect and stable with respect to $I$. 
\end{theorem}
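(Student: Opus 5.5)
The plan is to chain the two observations with the Lone-Wolf Theorem inside the single pruned instance $\hat{I}$.

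\textbf{Step 1: $\mu$ is stable in $\hat{I}$.} By hypothesis, $\mu$ exists in $\hat{I}$. Observation~\ref{lem:mu-stable-in-I'} then says that $\mu$, which is stable in $I$, is also a stable matching of $\hat{I}$. Since $\mu$ is perfect, every agent of $R \sqcup H$ is matched in this stable matching of $\hat{I}$.

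\textbf{Step 2: $\hat{\mu}$ is perfect.} By construction, $\hat{\mu}$ is the output of resident-proposing DA on $\hat{I}$, so it is a stable matching of $\hat{I}$. The DA correctness theorem applies equally to incomplete lists, since this standard fact holds for the incomplete-list setting too. We now apply Theorem~\ref{thm:lone-wolf} to the instance $\hat{I}$. If some agent $a$ were unmatched in $\hat{\mu}$, then $a$ would be unmatched in every stable matching of $\hat{I}$, including $\mu$. That contradicts Step 1, so every agent is matched in $\hat{\mu}$ and $\hat{\mu}$ is perfect. Because each pair of $\hat{\mu}$ lies in $\hat{L}$ and $\hat{L} \subseteq L$, $\hat{\mu}$ is also a perfect matching of $I$.

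\textbf{Step 3: $\hat{\mu}$ is stable in $I$.} This follows directly from Observation~\ref{lem:stable-in-hatI-implies-stable-in-I}: every stable matching of $\hat{I}$ is stable in $I$. That observation uses the prefix structure of $\hat{L}(h)$, namely that a resident $h$ prefers to $\hat{\mu}(h)$ must lie in $\hat{L}(h)$. Here this is automatic because $\hat{\mu}(h)$ exists for every $h$ by Step 2.

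The only subtle point is Step 2. One has to make sure the Lone-Wolf Theorem is invoked on $\hat{I}$, where incompleteness of the lists makes unmatched agents possible, rather than on $I$. One also needs $\hat{\mu}$ to be genuinely stable in $\hat{I}$. Beyond that, the argument is a direct chaining of the cited results.
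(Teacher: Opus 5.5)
Your proposal is correct and follows the paper's own derivation: Observation~\ref{lem:mu-stable-in-I'} places $\mu$ among the stable matchings of $\hat{I}$, the Lone-Wolf Theorem applied within $\hat{I}$ forces $\hat{\mu}$ to be perfect, and Observation~\ref{lem:stable-in-hatI-implies-stable-in-I} transfers stability back to $I$. Your remark that the latter observation's argument needs $\hat{\mu}(h)$ to exist for every hospital, which Step 2 supplies, is a worthwhile point that the paper's statement of that observation glosses over.
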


We note that a similar claim when restricted to the case of individually truncated preference lists is observed in~\citet[Claim 3.1]{cai2022short}.

\subsection{Number of Proposals}
We bound the number of proposals made by the PDA algorithm, in the case where some stable matching exists within the pruned instance. 
Let $\mathcal{M}$ be the set of stable matchings in $
\hat{I}$. Note that by Observation \ref{lem:stable-in-hatI-implies-stable-in-I}, $\mathcal{M}$ is a subset of all stable matchings in $I$. Let $\tilde{\mu}$ be the stable matching in $\mathcal{M}$ that is \defn{closest} to the predicted ranks, defined as 
\[
\tilde{\mu} := \argmin_{M \in \mathcal{M}} \sum_{i\in[n]} \left[\rho_i - \text{rank}(\hat{L}(h_i), M(h_i))\right].
\]

\begin{lemma}\label{lem:algo-gives-closest-stable-matching}
The matching $\hat{\mu}$, found by the PDA algorithm, is the same as ${\tilde\mu}$.
\end{lemma}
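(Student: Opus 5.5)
We need to show that $\hat{\mu}$ minimizes $\sum_i [\rho_i - \text{rank}(\hat{L}(h_i), M(h_i))]$ over $M \in \mathcal{M}$. Since $\hat{L}(h_i) = L(h_i)[1:\rho_i]$ is a prefix, $\text{rank}(\hat{L}(h_i), \cdot) = \text{rank}(L(h_i), \cdot)$. The summand $\rho_i - \text{rank}(\hat{L}(h_i), M(h_i))$ is therefore nonnegative and decreases as $h_i$ is matched to a worse resident. Minimizing the sum thus amounts to making every hospital's partner as low-ranked as possible, and the natural candidate is the hospital-pessimal stable matching of $\hat{I}$. The plan is to show, coordinatewise, that $\hat{\mu}$ maximizes $\text{rank}(\hat{L}(h_i), M(h_i))$ for every $i$ simultaneously. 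A pointwise maximizer of each term then minimizes the sum.

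\textbf{Step 1.} Since $\hat{\mu}$ is produced by resident-proposing DA on $\hat{I}$, Theorem~\ref{thm:DA-optimality} says $\hat{\mu}$ is receiver-pessimal in $\hat{I}$. Hence for every $M \in \mathcal{M}$ and every hospital $h_i$ matched in $M$, $h_i$ weakly prefers $M(h_i)$ to $\hat{\mu}(h_i)$. This gives $\text{rank}(\hat{L}(h_i), \hat{\mu}(h_i)) \ge \text{rank}(\hat{L}(h_i), M(h_i))$, and so each summand for $\hat{\mu}$ is at most the corresponding summand for $M$.

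\textbf{Step 2.} Handle unmatched agents. By Theorem~\ref{thm:lone-wolf}, applied to $\hat{I}$, the set of matched hospitals is the same in every $M \in \mathcal{M}$, including $\hat{\mu}$. So the summands are compared over the same index set, with whatever convention the definition uses for unmatched hospitals. In the setting of interest, some perfect stable matching exists in $\hat{I}$, and by Theorem~\ref{thm:underprediction-gives-imperfect-matching} everyone is matched, so this issue disappears.

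\textbf{Step 3.} Summing the pointwise inequalities gives $\sum_i[\rho_i - \text{rank}(\hat{L}(h_i),\hat{\mu}(h_i))] \le \sum_i[\rho_i - \text{rank}(\hat{L}(h_i),M(h_i))]$ for all $M \in \mathcal{M}$. Also $\hat{\mu} \in \mathcal{M}$, because DA outputs a stable matching of $\hat{I}$. Hence $\hat{\mu}$ attains the argmin.

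\textbf{Step 4.} For equality $\tilde{\mu} = \hat{\mu}$, rather than just $\hat{\mu}$ being a minimizer, argue uniqueness. Suppose $M$ also attains the minimum. Every pointwise inequality must then be tight, so $M(h_i) = \hat{\mu}(h_i)$ for all $i$, and $M = \hat{\mu}$.

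The main obstacle is really just the bookkeeping in Step 2 around unmatched hospitals and ensuring the argmin is well defined. The core of the argument is the receiver-pessimality of resident-proposing DA.
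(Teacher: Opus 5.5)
Your proposal is correct and follows essentially the same argument as the paper. Receiver-pessimality of resident-proposing DA on $\hat{I}$ makes $\hat{\mu}$ a simultaneous coordinatewise minimizer of each hospital's distance $\rho_i - \text{rank}(\hat{L}(h_i), M(h_i))$, and hence a minimizer of the sum. Your Steps 2 and 4 (the lone-wolf bookkeeping for unmatched hospitals, and uniqueness of the argmin via tightness of the pointwise inequalities) spell out details the paper leaves implicit.
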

\begin{proof}
   Observe that $\hat{\mu}$ is the hospital-pessimal (and resident-optimal) stable matching in $\hat{I}$ by Theorem~\ref{thm:DA-optimality}. Therefore, for every hospital $h_i$, among all partners attainable in $\mathcal{M}$, $\hat{\mu}(h_i)$ is the one closest to the end of the truncated list $\hat{L}(h_i)$. Hence, $\hat{\mu}$ is the stable matching that simultaneously minimizes the distance to the predicted rank on each hospital's list, and thus also minimizes the sum of these distances.
\end{proof}

Now, we bound the number of proposals made by the PDA algorithm by the distance to the nearest stable matching.

\begin{theorem}
\label{thm:pda_running_time}
    The number of proposals performed by the PDA algorithm is at most 
    \[\min_{M \in \mathcal{M}} \sum_{i \in [n]} \left[\rho_i - \text{rank}(\hat{L}(h_i), M(h_i)) + 1\right].\]
\end{theorem}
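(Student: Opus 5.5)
Since the PDA algorithm is resident-proposing DA on $\hat{I}$, I will count proposals on the receiving side. Every proposal in resident-proposing DA is made to some hospital $h_i$ by a resident $r$ with $h_i \in \hat{L}(r)$; by construction this requires $r \in \hat{L}(h_i)$, so $\text{rank}(\hat{L}(h_i), r) \le \rho_i$. The plan is to show that every resident who proposes to $h_i$ at any point during the execution is ranked by $h_i$ no better than its final partner $\hat{\mu}(h_i)$. Then the proposals received by $h_i$ come from distinct residents occupying ranks in the interval from $\text{rank}(\hat{L}(h_i), \hat{\mu}(h_i))$ to $\rho_i$ of $\hat{L}(h_i)$. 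Hence $h_i$ receives at most $\rho_i - \text{rank}(\hat{L}(h_i), \hat{\mu}(h_i)) + 1$ proposals.

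For the key step I will use the standard monotonicity of the receiving side in DA. Once a hospital holds a tentative proposal, it only ever exchanges it for a more preferred one. So its tentative partner improves over time, and the final partner $\hat{\mu}(h_i)$ is the best proposal it ever received. Every other proposer to $h_i$ is therefore ranked strictly worse than $\hat{\mu}(h_i)$ on $\hat{L}(h_i)$, while still lying within the prefix of length $\rho_i$. Each resident proposes to a given hospital at most once, so the proposers are distinct. This gives the per-hospital count above. Hospitals unmatched in $\hat{\mu}$ need a separate word; I will work under the standing assumption of this subsection that $\mathcal{M}$ is nonempty. Then Theorem~\ref{thm:lone-wolf} applied within $\hat{I}$, together with $\hat{\mu} \in \mathcal{M}$, makes $\hat{\mu}$ perfect whenever some member of $\mathcal{M}$ is perfect. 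If some member of $\mathcal{M}$ is imperfect, I will restrict the sum to matched hospitals, or interpret the rank of a missing partner suitably. I expect this bookkeeping to be the main point needing care rather than a real obstacle.

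Summing over $i$, the total number of proposals is at most $\sum_{i} [\rho_i - \text{rank}(\hat{L}(h_i), \hat{\mu}(h_i)) + 1]$. By Lemma~\ref{lem:algo-gives-closest-stable-matching}, $\hat{\mu} = \tilde{\mu}$. By definition, $\tilde{\mu}$ minimizes $\sum_i [\rho_i - \text{rank}(\hat{L}(h_i), M(h_i))]$ over $M \in \mathcal{M}$, and adding the constant $n$ does not change the minimizer. So the bound equals $\min_{M\in\mathcal{M}} \sum_i [\rho_i - \text{rank}(\hat{L}(h_i), M(h_i)) + 1]$, as claimed.
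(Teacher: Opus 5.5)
Your proposal is correct and follows the paper's own argument. Receiving-side monotonicity of DA means the proposers to $h_i$ are distinct residents ranked between $\hat{\mu}(h_i)$ and $\rho_i$ on $\hat{L}(h_i)$, and Lemma~\ref{lem:algo-gives-closest-stable-matching} then identifies the summed bound with the minimum over $\mathcal{M}$. Your bookkeeping for unmatched hospitals is moot: $\mathcal{M}$ is never empty, and the real standing assumption is that some stable matching of $I$ (necessarily perfect) survives in $\hat{I}$, so by Theorem~\ref{thm:lone-wolf} every member of $\mathcal{M}$ is perfect — and in any case a hospital left unmatched received no proposals at all.
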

\begin{proof}
    The proof follows from Lemma \ref{lem:algo-gives-closest-stable-matching} and the observation that in the DA algorithm, each receiver’s tentative match weakly improves throughout the execution. In other words, each hospital $h_i$ only receives proposals from residents ranked between $\hat{\mu}(h_i)$ and $\rho_i$ on its preference list. 
\end{proof}

\subsection{Adapting to Incorrect Predictions}
In this section, we analyze the case where all stable matchings are eliminated from the pruned instance~$\hat{I}$.  We show that if this happens, then the unmatched hospitals are a subset of those that have been under-predicted, that is, all their stable partners were eliminated in the pruning.  This gives a certificate of which predictions were inaccurate. 

Let $\mu^*_H$ be the hospital-optimal stable matching in $I$. We say that a hospital $h$ is {\it under-predicted} if $\mu^*_H(h) \notin \hat{L}(h)$. 
The following theorem establishes a structural property of $\hat{\mu}$ in the case where pruning eliminates all stable matchings. 

\begin{theorem}\label{thm:unmatched-subseteq-underpredicted}
    Let $\hat{\mu}$ be the matching found by the PDA algorithm. If a hospital $h$ is unmatched in $\hat{\mu}$, then~$h$ must be under-predicted in $I$.
\end{theorem}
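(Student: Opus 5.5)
We argue by contrapositive: assume $h$ is not under-predicted, i.e.\ $\mu^*_H(h) \in \hat{L}(h)$, and show that $h$ is matched in $\hat{\mu}$.

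The plan is to compare $\hat{\mu}$ with the original instance through the set of agents matched in $\hat{\mu}$. Consider an auxiliary instance $I'$ obtained from $I$ by truncating only $h$'s list to the prefix $\hat{L}(h) = L(h)[1:\rho]$, with residents' lists restricted accordingly. By hypothesis, $\mu^*_H$ exists in $I'$, so by Observation~\ref{lem:mu-stable-in-I'} it is stable in $I'$. Since $\mu^*_H$ is perfect, the Lone-Wolf Theorem~\ref{thm:lone-wolf} implies every stable matching of $I'$ is perfect. In particular the hospital-pessimal stable matching of $I'$ matches $h$ to some resident $r_0$ with $\text{rank}(L(h), r_0) \le \rho$.

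The key step is a monotonicity argument relating $I'$ to $\hat{I}$. Passing from $I'$ to $\hat{I}$ only deletes acceptable pairs at other hospitals. Removing acceptable partners from hospitals can only make those hospitals reject more, which only pushes residents further down their lists. In resident-proposing DA this makes each hospital's received set of proposals weakly larger. I would prove this directly via the DA execution: run resident-proposing DA on $\hat{I}$ and show that every proposal made in the $I'$ execution to $h$ is also made in the $\hat{I}$ execution. To make this precise, use the standard fact that resident-proposing DA outcome is order-independent (Theorem~\ref{thm:DA-optimality}). Then argue by induction over a sequential order that each resident in $\hat{I}$ is rejected by every hospital that rejected it in $I'$, so it reaches at least as far down its list. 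A deleted pair acts as an immediate rejection, and a hospital holding more competing offers rejects weakly more.

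Given this comparison, $h$ receives in $\hat{I}$ a proposal from $r_0$ (or someone who outcompetes $r_0$). Note $r_0 \in \hat{L}(h)$ and $h$'s list is identical in $I'$ and $\hat{I}$, so $h \in \hat{L}(r_0)$. Since a receiver that ever receives a proposal stays matched, $h$ is matched in $\hat{\mu}$.

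The main obstacle is making the monotonicity step rigorous. Deleting pairs at other hospitals also frees up residents who then propose elsewhere, so the claim ``more rejections everywhere'' needs care. The induction must be set up on rejection sets: show the set of (resident, hospital) rejections in the $I'$ run is contained in the set of rejections-or-deletions in the $\hat{I}$ run.

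If this gets messy, a fallback is a cleaner statement attributed to \citet{immorlica2005marriage}: a hospital truncating at position $i$ is matched under resident-proposing DA iff it has a stable partner of rank at most $i$. Applying this hospital by hospital for the other hospitals' truncations, combined with the stable-partner condition for $h$ via $\mu^*_H$, gives the result. However, that requires tracking stable partners after each successive truncation, which I expect to be the delicate part.
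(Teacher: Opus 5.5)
Your overall architecture is viable: truncate only $h$ to get $I'$, use Observation~\ref{lem:mu-stable-in-I'} and the Lone-Wolf Theorem~\ref{thm:lone-wolf} to see that resident-proposing DA on $I'$ gives $h$ a partner $r_0\in\hat L(h)$, then transfer this to $\hat I$. The gap is in the transfer step, which carries all the weight. You justify it by a principle that is false as stated.

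\textbf{The gap.} Deleting acceptable partners from a hospital does \emph{not} in general only increase rejections. Suppose $h'$ prefers $r_a$ to $r_b$, both propose to $h'$ in the $I'$ run, and $h'$ rejects $r_b$. If the new instance deletes only $r_a$ from $h'$'s list, then $h'$ may now keep $r_b$. The rejection $(r_b,h')$ is then neither a rejection nor a deletion in the new run, and the hospitals below $h'$ on $r_b$'s list (possibly $h$) lose a proposal. Top-of-window deletions as in WDA can do exactly this.

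In your proposed induction, the failure occurs precisely when $h'$ rejected $r$ in $I'$ because it held a better resident $r^*$, while $(r^*,h')$ is deleted in $\hat I$ but $(r,h')$ is not. Then $h'$ never sees the competing offer in $\hat I$, and ``a hospital holding more competing offers rejects weakly more'' says nothing about this case.

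\textbf{The fix.} Your sketch never uses the one fact that excludes this case: $\hat L(h')=L(h')[1:\rho']$ is a \emph{prefix}. So $r^*\succ_{h'} r$ and $r\in\hat L(h')$ force $r^*\in\hat L(h')$. With this added, induction on the order of rejections in a fixed sequential run on $I'$ goes through. By the inductive hypothesis, every hospital above $h'$ on the lists of $r^*$ and $r$ rejects or has deleted them in the $\hat I$ run. Hence both propose to $h'$ there, $h'$ ends with someone at least as good as $r^*$, and $r$ is rejected.

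Relatedly, your claim that every hospital's received set of proposals weakly grows is false for the truncating hospitals. It is true for $h$, whose list is identical in $I'$ and $\hat I$. That is all you need to conclude that $r_0$ proposes to $h$ in $\hat I$.

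\textbf{Comparison with the paper.} Once patched, your argument is correct and differs from the paper's. The paper works directly on $\hat\mu$ with an alternating chain:
\begin{itemize}
\item $r_1=\mu^*_H(h_1)\in\hat L(h_1)$ never proposed to the unmatched $h_1$, so it is held by some $h_2$ it prefers.
\item Stability of $\mu^*_H$ in $I$ gives $r_2=\mu^*_H(h_2)\succ_{h_2} r_1$.
\item The prefix property gives $r_2\in\hat L(h_2)$.
\item $r_2$ cannot have proposed to $h_2$, which ends with the worse $r_1$, so $r_2$ is held by some $h_3$, and so on.
\end{itemize}
Following this chain yields the contradiction. That proof uses the prefix property once per step and needs no auxiliary instance, no Lone-Wolf Theorem, and no comparative statics.

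Your route buys modularity: a reusable lemma that other hospitals' prefix truncations can only enlarge $h$'s proposal set under resident-proposing DA, combined with the single-truncation criterion. The cost is length. Your Immorlica--Mahdian fallback would run into the same monotonicity issue.
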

\begin{proof}
    For the sake of contradiction, suppose there is an unmatched hospital $h$, such that $\mu^*_H(h) \in \hat{L}(h)$. Without loss of generality, assume $h = h_1$ and $\mu^*_H(h) = r_1$. Since $h_1$ is unmatched, $r_1$ never proposed to $h_1$, despite $h_1$ being on its list. Thus, $r_1$ is matched to some other hospital, say $h_2$, which is ranked higher than $h_1$ on $\hat{L}(r_1)$. 
    
    Consequently, $h_2$ is not matched to $\mu^*_H(h_2) \equiv r_2$. 
    Since $(r_1,h_2)$ cannot be a blocking pair with respect to $\mu^*_H$, $r_2$ must be ranked higher than $r_1$ on $L(h_2)$, which means $h_2$ is not under-predicted. 
    Since $h_2$ is not under-predicted, implying $h_2 \in \hat{L}(r_2)$, and $r_2$ is not matched to $h_2$, then $r_2$ must be matched to some other hospital $h_3$ that it prefers to $h_2$. Again, this means $h_3$ is not matched to $\mu^*_H(h_3) \equiv r_3$, and since $(r_2,h_3)$ cannot be a blocking pair with respect to $\mu^*_H$, $r_3$ must be ranked higher than $r_2$ on $L(h_3)$, which implies that $h_3$ is not under-predicted. 
    Continuing inductively, we conclude that no hospital is under-predicted. Hence, $\mu^*_H$ exists in $\hat{I}$ and Theorem \ref{thm:underprediction-gives-imperfect-matching} implies that all hospitals are matched, a contradiction. 
\end{proof}

Thus, the PDA algorithm can iteratively extend the preference lists of only the unmatched hospitals and re-run resident-proposing DA until a perfect matching is found.
\section{Lower Bound}\label{sec:lower}

The analysis of the WDA and PDA algorithms in Sections~\ref{sec:WDA} and~\ref{sec:PDA} relies on the assumption that a stable
matching $\mu$ on the original instance $I$ survives in the pruned instance $\hat{I}$. It is natural to ask if this assumption is necessary.  In particular, can we improve upon the deferred-acceptance algorithm without such a promise?  In this section, we prove a lower bound showing this is not possible.  In particular, we prove that even with ``near-perfect'' predictions, without this promise, finding a stable matching still takes~$\Omega(n^2)$ time. 

We show that given a bound $\eta$, determining whether the maximum error is more or less than $\eta$, that is to say, determining if $\eta > \max_i \eta_i$ or $\eta \leq \max_i \eta_i$, requires $\Omega((n-2\eta)^2)$ queries to the agents' preference lists, and therefore $\Omega((n - 2\eta)^2)$ total time.  

One may further ask if it is possible to improve upon this bound if the prediction is near-perfect and contains only a small number of outliers---that is, can we obtain bounds using the average error, or if the number of agents with large error is small?  However, our lower bound holds even if we are promised that the prediction has only six agents whose predicted match is incorrect.  Even so, just finding those six agents requires $\Omega(n^2)$ time.

In summary, our lower bound shows that even near-perfect predictions are not sufficient to improve upon the worst-case cost of $\Omega(n^2)$.  This shows that the assumption that a stable matching exists in the pruned instance $\hat{I}$ is necessary in the analysis of our algorithms.

Concretely, in the remainder of this section we will prove the following theorem.
\begin{theorem}\label{thm:lower}
    Given a predicted matching $\hat{\mu}$ and an error bound $\eta$, any algorithm to determine if $\hat{\mu}$ has maximum error at most $\eta$ requires $\Omega((n - 2\eta)^2)$ queries to the agents' preference lists in the worst case, even if $\hat{\mu}$ can be made stable by changing the match of at most $6$ agents.
\end{theorem}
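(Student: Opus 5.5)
Throughout, the "maximum error" of $\hat{\mu}$ is read as the largest rank displacement, over all hospitals, between $\hat{\mu}(h_i)$ and that hospital's partner in the stable matching nearest to $\hat{\mu}$. The plan is a reduction from set disjointness in the query model, i.e.\ an adversary argument in the style of two-party communication. The paper already defines $\disjoint$, which suggests this route. We will construct, for $m = n - 2\eta - O(1)$, a family of instances on $n$ residents and $n$ hospitals in which $\Theta(m^2)$ "hidden bits" are embedded in the preference lists. The predicted matching $\hat{\mu}$ will be a fixed perfect matching that is stable in every instance of the family except when the hidden bits contain a special configuration. In that case exactly one blocking structure appears, and repairing it forces some agent to move by more than $\eta$ positions.

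\textbf{Padding.} First reserve $2\eta$ (or roughly that many) "padding" agents. These are placed at the top of every core agent's list so that, in any stable matching, each core agent's partner has rank greater than $\eta$ from both ends of the relevant window. This guarantees that any change in a core agent's match is a rank change exceeding $\eta$. The remaining $m$ residents and $m$ hospitals form the core, where $\hat{\mu}$ is the identity matching $r_i \leftrightarrow h_i$.

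\textbf{Hidden bits in the core.} For each pair $(i,j)$ with $i \ne j$, the relative order of $h_j$ versus $h_i$ in $r_i$'s list, and of $r_i$ versus $r_j$ in $h_j$'s list, encodes a bit $x_{ij}$. These are placed at positions that can be learned only by a query touching that pair. The default arrangement makes $\hat{\mu}$ stable, because for every $(i,j)$ at least one side prefers its current partner. A "planted" pair $(i^*,j^*)$ makes $(r_{i^*},h_{j^*})$ mutually prefer each other. To keep the promise that only $6$ agents need to change, the plant is built as a small gadget: a 3-cycle rotation among $r_{i^*},h_{j^*}$ and a couple of auxiliary agents, so that rotating the matching along it yields a stable matching. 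Rotating a 3-cycle changes $6$ agents. The padding makes each of those changes cost more than $\eta$ in rank.

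\textbf{Lower-bound argument.} The algorithm must distinguish "no planted pair", where the maximum error is $0 \le \eta$, from "some planted pair among $\Theta(m^2)$ candidates", where the maximum error exceeds $\eta$. Each query to a preference list reveals information about $O(1)$ candidate pairs. The adversary answers every query consistently with the default until only one unqueried pair remains, which gives $\Omega(m^2) = \Omega((n-2\eta)^2)$ queries. Equivalently, split the pairs into Alice's bits and Bob's bits and reduce from $\disjoint$ on $\Theta(m^2)$ bits. The query bound then follows because each query can be simulated with $O(\log n)$ bits of communication.

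\textbf{Main obstacle.} The hardest step is designing the local gadget. The planted pair must create instability whose only repairs move few agents, here $6$, and each by more than $\eta$. At the same time, in the unplanted case every pair must be non-blocking, and the arrangement must keep each list a valid permutation whose entries leak nothing about other pairs. I would first try a cyclic "Latin-square" style ordering of the core lists, as in standard $\Omega(n^2)$ verification lower bounds for stable matching. I would then check by case analysis that swapping two adjacent entries creates exactly one blocking pair, that this pair is resolved by one rotation of length $3$, and that no cascade of further changes is forced.
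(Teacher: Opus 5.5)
Your outline has the same architecture as the paper's proof: reduce from set disjointness on $\Theta((n-2\eta)^2)$ elements, pad so that every deviation from $\hat{\mu}$ is a displacement larger than $\eta$, and repair the unique blocking pair with a 3-cycle through auxiliary agents. However, the steps that carry the weight either fail as written or are left open.

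\textbf{Padding and encoding.} The padding is in the wrong place. If the $2\eta$ padding agents sit at the top of every core list, the core agents occupy consecutive ranks beneath them. Trading $r_i$ for a neighbouring core resident on $h_i$'s list is then a displacement of $1$. You need $\eta$ padding agents immediately above the predicted partner and at least $\eta$ immediately below it. The bit for $(i,j)$ is then which side of that block $r_j$ falls on. The paper gives $h_i$ the list $[\{r_j:(i,j)\in A\}\mid P_r^u\mid r_i\mid B_r\mid P_r^\ell\mid \{r_j:(i,j)\notin A\}]$ and encodes $B$ symmetrically on the residents' side.

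This also rules out your Latin-square plan. Stability of a fixed $\hat{\mu}$ depends only on which agents each list places above the partner. An adjacent swap changes that set only if it involves the partner, which happens in at most two ways per list. So single adjacent swaps give $O(m)$ plantable pairs, not $\Theta(m^2)$.

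With the set encoding (which your ``relative order of $h_j$ versus $h_i$'' already is), the claim ``each query reveals $O(1)$ pairs'' is not automatic, because the partner's rank counts the whole set. Unless every plant preserves each partner's rank, querying the $2m$ partner positions already certifies the default, and your adversary collapses. The paper avoids designing a count-preserving family by using the randomized communication complexity of disjointness under the promise $|A\cap B|\le 1$. One player holds the hospital lists (encoding $A$) and the other holds the resident lists (encoding $B$). Note also that charging $O(\log n)$ bits per query only yields $\Omega((n-2\eta)^2/\log n)$; the stated bound needs $O(1)$-bit queries such as comparisons.

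\textbf{The gadget.} The gadget you flag as the main obstacle is exactly what gives the $6$-agent clause, and it is not constructed. It also cannot be per-pair, since there are $\Theta(m^2)$ candidate pairs but only $O(n)$ agents to spare. The missing idea is one universal backup pair $B_h,B_r$:
\begin{itemize}
\item $B_r$ sits directly below $r_i$ on every core hospital $h_i$'s list, between the partner and the lower padding.
\item $B_h$ sits directly below $h_j$ on every core resident $r_j$'s list.
\item $B_h$ and $B_r$ rank all core agents first and each other last.
\end{itemize}
When $A\cap B=\emptyset$, the backups never block: every core agent holds its predicted partner and ranks it above the backup on its list.

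When $(\hat i,\hat j)$ is the unique common element, match $h_{\hat i}$--$r_{\hat j}$, $r_{\hat i}$--$B_h$, $h_{\hat j}$--$B_r$ and leave everything else unchanged. This matching is stable. Any blocking pair among core agents would give a second element of $A\cap B$; in particular, the cross pair $(r_{\hat i},h_{\hat j})$ blocks only if $(\hat j,\hat i)\in A\cap B$. Every core agent not matched to a backup ranks its current partner above the backup on its list, so the backups cannot block either.

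With the corrected padding, every stable matching must then give some hospital a partner more than $\eta$ ranks from its predicted one. Either $B_h$ gets a core resident, or some core hospital gets another core resident. This is the required separation.
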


Our lower bound is an extension of~\cite{GonczarowskiNiOs14} and uses a reduction from the classic set disjointness problem. 
The {\bf set disjointness} problem, denoted $\disjoint(A,B)$, asks whether two subsets $A, B \subseteq [N]$ are disjoint. Formally, $\disjoint(A,B) = 1$ if $A\cap B = \emptyset$, and $\disjoint(A,B) = 0$ otherwise.  A classic result is that the randomized communication complexity of $\disjoint(A,B)$ is $\Omega(N)$, even if $|A\cap B|\leq 1$ ~\citep{KalyanasundaramSchintger92,Razborov90}.   

We embed two sets of size $(n - 2\eta)^2$ in a stable matching instance with a prediction $\hat{\mu}$, such that $\hat{\mu}$ has error $< \eta$ if and only if the sets are disjoint.  This generalizes~\citet[Lemma 4.1]{GonczarowskiNiOs14}, which embeds two sets in a stable matching instance to  show that verifying if a given matching is stable requires $\Omega(n^2)$ time.  Our proof generalizes theirs in two ways: first, we pad their embedding so that $\hat{\mu}$ is stable if and only if it has error at most $\eta$; second, we add two extra agents to ensure that the average error of $\hat{\mu}$ is small.

\paragraph{Constructing the stable matching instance.}
Consider an instance of the set disjointness problem in which the sets $A$ and $B$ are subsets of
\[U := \{(i,j): i,j \in [n - 2\eta-1], i \neq j\}.\]
We assume that $|A\cap B| \leq 1$.
    
We construct a corresponding stable matching instance with $n$ hospitals and $n$ residents.
We first partition the hospitals and residents into four groups:
\begin{itemize}[noitemsep]
    \item $S_h$ and $S_r$ are sets of $n - 2\eta - 1$ hospitals and residents respectively;  these encode the set disjointness instance
    \item $P_h^u$ and $P_r^u$ are sets of $\eta$ hospitals and residents respectively;  these serve as padding \emph{above} the predicted match
    \item $P_h^\ell$ and $P_r^\ell$ are sets of $\eta$ hospitals and residents respectively; these serve as padding \emph{below} the predicted match
    \item $B_h$ and $B_r$ contain a single hospital and resident; they serve as backup sets to ensure that, whenever ${A \cap B \neq \emptyset}$, only four hospitals need to change their match to reach a stable matching
\end{itemize}
To specify the preference lists, we use the notation $[X|Y]$ (where $X$ and $Y$ are sets) to denote the sequence consisting of the elements of $X$ ordered arbitrarily, followed by the elements of $Y$ ordered arbitrarily; we extend this notation to more than two sets in the natural way.

The preference lists of hospitals and residents in $P_h^u$ and $P_r^u$ are constructed as follows. For each $i$, the $i$th hospital in $P_h^u$ and $i$th resident in $P_r^u$ are first on each other's preference list;  the rest of their list is arbitrary.  These are essentially ``dummy agents''---hospital and resident $i$ in $P_h^u$ and $P_r^u$ will be matched in any stable matching.  The same construction is used for $P_h^\ell$ and $P_r^\ell$. 
    
Next, we describe the preference lists for the agents in $S_h$ and $S_r$.
The preference list of the $i$th hospital in $S_h$ is
\[
\left[\{ j \in S_r : (i,j) \in A \} \,|\, P_r^u \,|\, 
\{i\text{th resident in }S_r\} 
\,|\, 
B_r  \,|\, P_r^\ell \,|\, \{ j \in S_r : (i,j) \notin A \} \right].
\]
Similarly, the preference list of the $j$th resident in $S_r$ is
\[
\left[\{ i \in S_h : (i,j) \in B \} \,|\, P_h^u \,|\, \{j\text{th hospital in }S_h\} 
\,|\, 
B_h \,|\, P_h^\ell \,|\, \{ i \in S_h : (i,j) \notin B \}\right].
\]
Finally, the hospital in $B_h$ and the resident in $B_r$ have respective preference lists
\[
\left[S_r \,|\, P_r^u \,|\, P_r^\ell \,|\, B_r\right] \quad \text{and} \quad
\left[S_h \,|\, P_h^u \,|\, P_h^\ell \,|\, B_h\right].
\]

\paragraph{Predicted matching.}  We define the predicted matching $\hat{\mu}$ as follows:
\begin{itemize}[noitemsep]
    \item the $i$th hospital in $S_h$ is matched with the $i$th resident in $S_r$,
    \item the $i$th hospital in $P_h^u$ (resp. $P_h^\ell$) is matched with the $i$th resident in $P_r^u$ (resp. $P_r^\ell$), and
    \item the hospital in $B_h$ is matched with the resident in $B_r$.
\end{itemize}

We now prove that $\hat{\mu}$ has maximum error at most $\eta$, if, and only if, $A$ and $B$ are disjoint.  Moreover, we show that there are few agents with any error in $\hat{\mu}$, and in fact the \defn{average error} of $\hat{\mu}$ is at most 4.  These lemmas together prove Theorem~\ref{thm:lower}.

\begin{lemma}\label{lem:lowerbound-mapping}
    The sets $A$ and $B$ are disjoint if and only if the maximum error of $\hat{\mu}$ is at most $\eta$.
\end{lemma}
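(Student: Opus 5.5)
The plan is to read ``maximum error at most $\eta$'' as the statement that some stable matching $\mu$ of $I$ satisfies $|\text{rank}(L(h),\hat{\mu}(h)) - \text{rank}(L(h),\mu(h))| \le \eta$ for every hospital $h$. Equivalently, some stable matching lies entirely inside the windows $L(h)[\rho_h-\eta:\rho_h+\eta]$, where $\rho_h = \text{rank}(L(h),\hat{\mu}(h))$. I would then prove the two directions separately.

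\emph{Disjoint $\Rightarrow$ error at most $\eta$.} I will show that $\hat{\mu}$ is itself stable, so its error is $0$.
\begin{itemize}[noitemsep]
  \item Every agent in $P^u_h,P^u_r,P^\ell_h,P^\ell_r$ is matched to its first choice, so no blocking pair contains a padding agent.
  \item The hospital in $B_h$ and the resident in $B_r$ prefer $S$ and padding agents to their partner. However, every $S$ agent ranks $\hat{\mu}$'s partner above $B_r$ (resp.\ $B_h$), and padding agents are already satisfied. So no blocking pair contains a backup agent.
  \item For a candidate pair between the $i$th hospital of $S_h$ and the $j$th resident of $S_r$: the hospital prefers $r_j$ to its partner $r_i$ only if $(i,j)\in A$, and the resident prefers $h_i$ to its partner $h_j$ only if $(i,j)\in B$. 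Since $A\cap B=\emptyset$, these never both hold.
\end{itemize}

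\emph{Intersecting $\Rightarrow$ error exceeds $\eta$.} Here I would argue by contradiction. The original instance has a stable matching by the Gale--Shapley theorem, so suppose some stable matching $\mu$ lies within all windows.

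First, I compute the window of the $i$th hospital of $S_h$. Its predicted rank is $\rho=|A_i|+\eta+1$, where $A_i=\{j:(i,j)\in A\}$. The window is therefore ranks $|A_i|+1$ through $|A_i|+2\eta+1$, which consists of $P^u_r$, the $i$th resident of $S_r$, $B_r$, and the first $\eta-1$ members of $P^\ell_r$.

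Next, in every stable matching each padding pair is matched together, because its members are mutual first choices. Hence in $\mu$ each $S_h$ hospital is matched either to its own $S_r$ resident or to $B_r$. Since $B_r$ is a single resident, at most one $S_h$ hospital $h_k$ deviates. A counting argument on $S_r\cup B_r$ versus $S_h\cup B_h$ then forces $\mu$ to be one of two matchings:
\begin{itemize}[noitemsep]
  \item $\mu=\hat{\mu}$ itself;
  \item $\hat{\mu}$ with the swap $h_k\!-\!B_r$, $B_h\!-\!r_k$.
\end{itemize}
Both are unstable:
\begin{itemize}[noitemsep]
  \item In $\hat{\mu}$, the pair $(h_i,r_j)$ with $(i,j)\in A\cap B$ blocks, since $h_i$ ranks $r_j$ (which lies in $A_i$) above $r_i$, and $r_j$ ranks $h_i$ (in $B$) above $h_j$.
  \item In the swapped matching, $(h_k,r_k)$ blocks, since $h_k$ ranks $r_k$ above $B_r$ and $r_k$ ranks $h_k$ above $B_h$.
\end{itemize}
This contradiction shows that no stable matching fits the windows, i.e.\ the error exceeds $\eta$.

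The main obstacle I expect is pinning down exactly what ``error'' means and checking the window computation at its edges: that the $A_i$ residents sit just outside the window from above, and that only $P^\ell$ agents, never other $S_r$ residents, sit inside it from below. If resident-side windows are also counted, a symmetric computation with $B$ in place of $A$ gives the same conclusion.
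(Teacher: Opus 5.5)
Your proof is correct and follows essentially the same route as the paper: when $A\cap B=\emptyset$ you show $\hat{\mu}$ is itself stable, and otherwise you note that $\hat{\mu}$ is blocked by the pair coming from $A\cap B$, while every other stable matching pushes some agent outside its radius-$\eta$ window. The only minor difference is how you rule out the one-swap matching: you exhibit the blocking pair $(h_k,r_k)$, whereas the paper observes that the backup hospital would then be matched to a resident of $S_r$, more than $\eta$ positions from its predicted partner. Your own window computation also gives this, since the backup hospital's window contains only $P^\ell_r$ and $B_r$.
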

    
\begin{proof}
To begin, we show that if $A$ and $B$ are disjoint, then $\hat{\mu}$ is stable, which immediately implies that the maximum error is at most $\eta$, since it is 0. Suppose, for the sake of contradiction, that $\hat{\mu}$ contains an unstable pair. 
    
Note that no agent from $P_h^u$, $P_h^\ell$, $P_r^u$, or $P_r^\ell$ can be part of an unstable pair. Similarly, $B_h$ and $B_r$ cannot participate in an unstable pair: every hospital in $S_h$ is matched to a resident that it prefers over $B_r$, and the same holds for residents in $S_r$. Therefore, any unstable pair must consist of a hospital $h \in S_h$ and a resident $r \in S_r$.
    
Examining $h$'s preference list, the only residents that $h$ prefers over its current match are those corresponding to $(i,j) \in A$. Likewise, for $r$, the only hospitals it prefers over its match correspond to $(i,j) \in B$. This would contradict the assumption that $A$ and $B$ are disjoint.

Next, we show that if $\hat{\mu}$ has maximum error at most $\eta$, then $A$ and $B$ are disjoint. 
We do so by proving that if $\hat{\mu}$ is stable, then $A \cap B = \emptyset$, and if $\hat{\mu}$ is unstable, then it has maximum error more than $\eta$. 
  
First, suppose $\hat{\mu}$ is stable. If a pair $(\hat{i},\hat{j}) \in A \cap B$ exists, then the $\hat{i}$th hospital in $S_h$ and the $\hat{j}$th resident in $S_r$ would form a blocking pair with respect to $\hat{\mu}$. Therefore, stability of $\hat{\mu}$ implies that $A$ and $B$ must be disjoint.

Next, consider the case where $\hat{\mu}$ is not stable. We show that any stable matching $\mu$ must have distance more than $\eta$ from $\hat{\mu}$ on some preference list. Let $\mu$ be a stable matching. Since $\hat{\mu}$ is not stable, we have $\mu \neq \hat{\mu}$, which implies that there exists a pair $(h,r)\in \mu$ such that $(h,r) \notin \hat{\mu}$.    

The agents in $P_h^u$, $P_h^\ell$, $P_r^u$, and $P_r^\ell$ each have the same match in $\hat{\mu}$ and in all stable matchings. Hence, either $h \in S_h$ or $h \in B_h$. If $h \in B_h$, then $r \in S_r$, which implies that $\mu(h)$ is more than $\eta$ positions away from $\hat{\mu}(h)$ on $h$'s preference list. A similar argument holds if $r \in B_r$. If $h \in S_h$ and $r \in S_r$, since~$(h,r) \notin \hat{\mu}$, it follows that $h$ and $r$ are both more than $\eta$ positions away from each other’s matches in~$\hat{\mu}$.  
\end{proof}
  
\begin{lemma}\label{lem:averagerror}
    If $\hat{\mu}$ is not stable, then there exists a stable matching $\mu$ that  can be obtained by changing the match of at most $6$ agents in $\hat{\mu}$. Furthermore, $\hat{\mu}$ has average error at most $4$.
\end{lemma}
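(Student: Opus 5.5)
The plan is to construct the stable matching explicitly. If $\hat{\mu}$ is not stable, the first part of the proof of Lemma~\ref{lem:lowerbound-mapping} shows that $A \cap B \neq \emptyset$. Since we assume $|A\cap B| \le 1$, we have $A \cap B = \{(\hat{i},\hat{j})\}$ for a single pair, with $\hat{i} \neq \hat{j}$ because $(\hat i,\hat j)\in U$.

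Name the relevant agents as follows:
\begin{itemize}[noitemsep]
    \item $h$ is the $\hat{i}$th hospital in $S_h$, and $r'$ is its $\hat{\mu}$-partner, the $\hat{i}$th resident in $S_r$;
    \item $r$ is the $\hat{j}$th resident in $S_r$, and $h'$ is its $\hat{\mu}$-partner, the $\hat{j}$th hospital in $S_h$;
    \item $b_h \in B_h$ and $b_r \in B_r$ are the backup agents.
\end{itemize}
Define $\mu$ to agree with $\hat{\mu}$ except on these six agents, which are matched as
\[
\mu(h) = r, \qquad \mu(h') = b_r, \qquad \mu(b_h) = r'.
\]
So $\mu$ changes the matches of exactly six agents. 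The work is to verify that $\mu$ has no blocking pair, by checking each agent against the hospitals or residents it prefers to its $\mu$-partner.

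\textbf{Padding agents.} Agents in $P_h^u, P_r^u, P_h^\ell, P_r^\ell$ are matched to their first choices, so they never block.

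\textbf{Unchanged agents of $S_h$ and $S_r$.} Such a hospital $h_i$ prefers only residents $j$ with $(i,j)\in A$, plus $P_r^u$. A resident $j$ of $S_r$ would prefer $h_i$ only if $(i,j)\in B$. Since the intersection contains only $(\hat i,\hat j)$, such a pair can block only if it is $(h,r)$, and $h$ and $r$ are now matched to each other.

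\textbf{The pair $h,r$.} Both are matched to a partner in their top block. The hospital $h$ prefers to $r$ only other residents $j$ with $(\hat i,j)\in A$. Each such resident is matched to its own $S_h$-hospital, ranked above $h$, unless $(\hat i,j)\in B$, which is impossible for $j \neq \hat{j}$. The symmetric argument handles $r$.

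\textbf{The hospital $h'$.} Its partner $b_r$ sits just after $h'$'s own resident $r$. The agents $h'$ prefers to $b_r$ are: residents $j$ with $(\hat j,j)\in A$, which do not reciprocate since $(\hat j,j)\notin B$; $P_r^u$, which is matched to its first choices; and $r$, which holds its top choice $h$. So $h'$ does not block. The argument for $r'$ is symmetric: it prefers only hospitals $i$ with $(i,\hat i)\in B$, $P_h^u$, and $h$. The hospital $h$ prefers its partner $r$ to $r'$, since $r$ is in its $A$-block.

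\textbf{The backup agents.} The resident $b_r$ prefers all of $S_h$ to its partner. Every hospital of $S_h$ is matched to something it ranks above $B_r$: its own resident, $r$ in the case of $h$, or $b_r$ itself in the case of $h'$. The same reasoning applies to $b_h$, which holds $r'\in S_r$ and would prefer other residents of $S_r$; those residents are all matched above $B_h$.

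I expect this case check, particularly ensuring that no unintended pair involving $h'$, $r'$, or the backup agents blocks, to be the main obstacle. The uniqueness of the intersecting pair is what makes every case go through.

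\textbf{Average error.} Only the six agents $h, r, h', r', b_h, b_r$ have a different partner in $\mu$ than in $\hat{\mu}$. Each such agent's rank displacement is at most $n$, and every other agent has error $0$. Averaging over the $2n$ agents gives an average error of at most $6n/(2n) = 3 \le 4$. If error is instead averaged over hospitals only, the three affected hospitals give at most $3n/n = 3$. Either way the bound of $4$ holds.
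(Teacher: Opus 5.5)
Your proposal is correct and follows the paper's proof: you use the same three re-pairings ($h_{\hat i}$ with $r_{\hat j}$, $r_{\hat i}$ with $B_h$, $h_{\hat j}$ with $B_r$) and the same no-blocking-pair check, and you carry the check out more carefully than the paper does (for example, you explicitly rule out the pair $(h', r)$; there $r$ ranks $h$ in its top block rather than first, which is all you need). For the average error, the paper uses finer per-agent bounds ($n-\eta-1$ for $h$ and $r$, exactly $1$ for $h'$ and $r'$ as you noticed, $n$ for the backups) to get total error at most $4n$, which gives $4$ even after dividing by $n$ as the paper does; your cruder bound of $n$ per agent gives total $6n$, so it reaches $\le 4$ only because you divide by $2n$ or count hospitals only.
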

    
\begin{proof}
By the proof of Lemma \ref{lem:lowerbound-mapping}, if $\hat{\mu}$ is unstable, then $A$ and $B$ are not disjoint; recall that we assume $|A \cap B| \leq 1$. Let $(\hat{i},\hat{j}) \in A \cap B$ denote the unique overlapping element.

Construct the matching $\mu$ as follows. The $\hat{i}$th hospital in $S_h$ is matched to the $\hat{j}$th resident in $S_r$. The $\hat{i}$th resident in $S_r$ is matched to $B_h$, and the $\hat{j}$th hospital in $S_h$ is matched to $B_r$. All other hospitals and residents retain their matches from $\hat{\mu}$. 
     
We now show that $\mu$ is stable. As before, no hospital or resident in $P_h^u$, $P_h^\ell$, $P_r^u$, or $P_r^\ell$ can participate in an unstable pair.  First, consider the $\hat{i}$th hospital $h$ in $S_h$. If $h$ is part of an unstable pair, then there exists a resident that $h$ prefers over the $\hat{j}$th resident in $S_r$. Let this resident be the $j'$th resident in $S_r$. This resident must also prefer $h$ over its current match, which implies $(\hat{i},j') \in A \cap B$, contradicting our assumption that $(\hat{i},\hat{j})$ is the unique element in $A \cap B$.  

Next, consider any hospital $h \in S_h$ that is not the $\hat{i}$th hospital. Suppose $h$ is part of an unstable pair $(h,r)$. Let $h$ be the $i'$th hospital in $S_h$, and let $r$ be the $j'$th resident in $S_r$. Again, we would have $(i', j') \in A \cap B$, contradicting the uniqueness of $(\hat{i},\hat{j})$.  Notice that the resident in $B_r$ cannot be part of a blocking pair, since every hospital other than the $\hat{j}$th hospital in $S_h$ prefers its current match over this resident. 

Finally, consider $h \in B_h$, matched to $r \in S_r$. If $h$ is part of an unstable pair, then there exists some $r' \in S_r$ such that $(h,r')$ forms a blocking pair. But $r'$ is matched to a hospital in $S_h$ that it prefers over $h$, which contradicts the assumption of a blocking pair.  

Since all hospitals have been considered and none can participate in an unstable pair, we conclude that $\mu$ is stable.

We now bound the average error of the predicted matching $\hat{\mu}$ relative to $\mu$.  
The $\hat{i}$th hospital in $S_h$ and the $\hat{j}$th resident in $S_r$ each have error at most $n - \eta - 1$, since their lists have size $n$ and there are at least $\eta + 1$ entries below their match in $\hat{\mu}$.  
The $\hat{i}$th resident in $S_r$ and the $\hat{j}$th hospital in $S_h$ each have error $1$.  
Finally, $B_h$ and $B_r$ each have error at most $n$.  All other hospitals and residents have error $0$.  Thus, the total error is $4n - 2\eta \leq 4n$, and the average error is at most $4$.
\end{proof}
\section{Experimental Results}
In this section, we empirically evaluate our algorithms. In particular, we aim to address the following  questions to show that the theory is predictive of practice:
\begin{enumerate}

\item \textbf{Prediction concentration:} Can high-quality predictions be easily learned from past data?  Further, do the realized match ranks concentrate tightly enough that short prediction windows (as assumed in our analysis) are realistic in common market models?
\item \textbf{Efficiency scaling}: When prediction windows contain the true stable match, does the empirical number of proposals scale with the prediction error rather than market size, as predicted by our analysis?
\end{enumerate}

We conduct three sets of experiments, each corresponding to a different regime for generating preference lists~\footnote{The Python implementations  of the experiments are available at \href{https://github.com/helia-niaparast/stable-matching-with-predictions}{https://github.com/helia-niaparast/stable-matching-with-predictions}.}.

\subsection{Generating the Preferences}
\paragraph{Mallows markets.}

In our first set of experiments, preference lists are generated from the Mallows distribution \citep{mallows1957non}, a widely used probabilistic model for producing correlated rankings \citep{DBLP:conf/atal/BrilliantovaH22, hoffman2023stable, borodin2025natural}. 
In this model, agents’ preferences are centered around a common underlying ordering, with individual lists obtained by perturbing this central ranking with some amount of random noise. 
Such a structure reflects many real-world matching markets, where there is broad agreement about relative quality on both sides of the market, but not complete consensus. 

Formally, for each side of the market, we fix a reference ranking and generate each agent’s preference list independently according to a Mallows distribution around this reference. The amount of noise is controlled by a parameter $\phi \in [0,1]$, which we take to be the same on both sides of the market. When $\phi = 0$, all agents’ preference lists coincide exactly with the reference ranking, yielding perfectly aligned preferences. As $\phi$ increases, agents’ rankings deviate more from the reference, and when $\phi = 1$, preferences are completely random and independent. 

Under this model, the probability of generating a ranking $\pi$ is given by
\[\Pr(\pi) = \frac{1}{Z(\phi)} \, \phi^{d(\pi,\pi^*)},\]
where $\pi^*$ denotes the reference ranking, $d(\pi,\pi^*)$ is the Kendall-tau distance, i.e., the number of inverted pairs between $\pi$ and $\pi^*$, and $Z(\phi)$ is a normalization constant. Lower values of $\phi$ concentrate probability mass on rankings close to the reference, resulting in highly correlated preference lists, while higher values produce greater variability.

\paragraph{Real-world dating markets.}
In the second set of experiments, preference lists are generated from a distribution derived from data of an online dating platform \citep{brozovsky2007recommender}. We use the implementation of \citet{tziavelis2019equitable} to generate our lists.

Preferences are constructed by first assigning a coarse compatibility rating to every man--woman pair using a 
$10 \times 10$ probability table estimated from the dating-platform data. For each pair, two ratings are sampled from this table: one corresponding to the man’s evaluation of the woman and the other to the woman’s evaluation of the man. Each man then ranks women from highest to lowest according to his sampled ratings, and each woman ranks men analogously. Because the ratings take only ten levels, ties are frequent; these are broken by slightly favoring candidates who tend to receive higher ratings from many agents on the opposite side (a popularity adjustment), and then by adding a small random perturbation to ensure strict preference orderings. Finally, agents on each side are relabeled by popularity so that lower indices correspond to more popular agents, and the resulting preference lists are returned. 

\paragraph{Tiered markets.}
Our third set of experiments are run on tiered markets as described in \citet{ashlagi2020tiered}. Here, each side of the market is partitioned into a fixed number of tiers. Each tier is specified by two quantities: a fraction that determines what portion of agents belong to that tier, and a weight that determines how desirable agents in that tier are perceived to be. Preference lists are generated by a weighted sampling procedure: when an agent constructs its ranking, it repeatedly selects one remaining candidate at a time, with selection probabilities proportional to the candidates’ tier weights, removing the selected candidate after each draw until a complete strict ordering is obtained. 
In the experiments, both sides are divided into five tiers with fractions $(0.1, 0.2, 0.4, 0.2, 0.1)$ and weights $(50, 25, 10, 5, 1)$. The same tier structure applies to both sides, and agents retain their assigned tier throughout all experiments.

\subsection{Generating the Predictions}  
In each set of experiments, we fix the proposing side and the receiving side throughout. We repeat the experiment across a set of parameter settings: in the Mallows experiments, we fix $n = 500$ and vary the dispersion parameter $\phi$ and in the other two sets of experiments we vary the market size $n$.

For each parameter setting, we generate 50 random instances and compute the proposer-optimal stable matching in each. For every receiving agent, we record the rank of its match across these matchings. Using these 50 observations, we predict a lower and upper bound on the rank of the match for each receiving agent. The lower bound is set as the minimum observed rank minus three standard deviations, and the upper bound as the maximum observed rank plus three standard deviations.

\subsection{Experimental Setup}
For each parameter setting, after constructing the predictions, we generate 50 additional random instances from the same distribution. For each market, we run the (1) classic DA algorithm, (2) the WDA algorithm using the lower and upper predicted ranks, and (3) the PDA algorithm using only the upper predicted ranks, and record the number of proposals made by each algorithm.

For WDA, we additionally report the percentage of instances in which the output is stable. For PDA, if any receiving agents remain unmatched, we iteratively extend their preference lists and rerun the algorithm until a perfect matching is obtained (the initial extension has size $\lfloor n/8 \rfloor$ and is doubled in each iteration).

For both PDA and WDA, we also report the \defn{instance size}, defined as the sum of the lengths of the receiving agents’ preference lists after truncation. When lists are extended in PDA, we recompute the total length of all receiving agents’ lists and add this quantity to the previously recorded total.

\begin{figure*}[t]
  \centering
  \begin{subfigure}{0.32\textwidth}
    \centering
    \includegraphics[width=\linewidth]{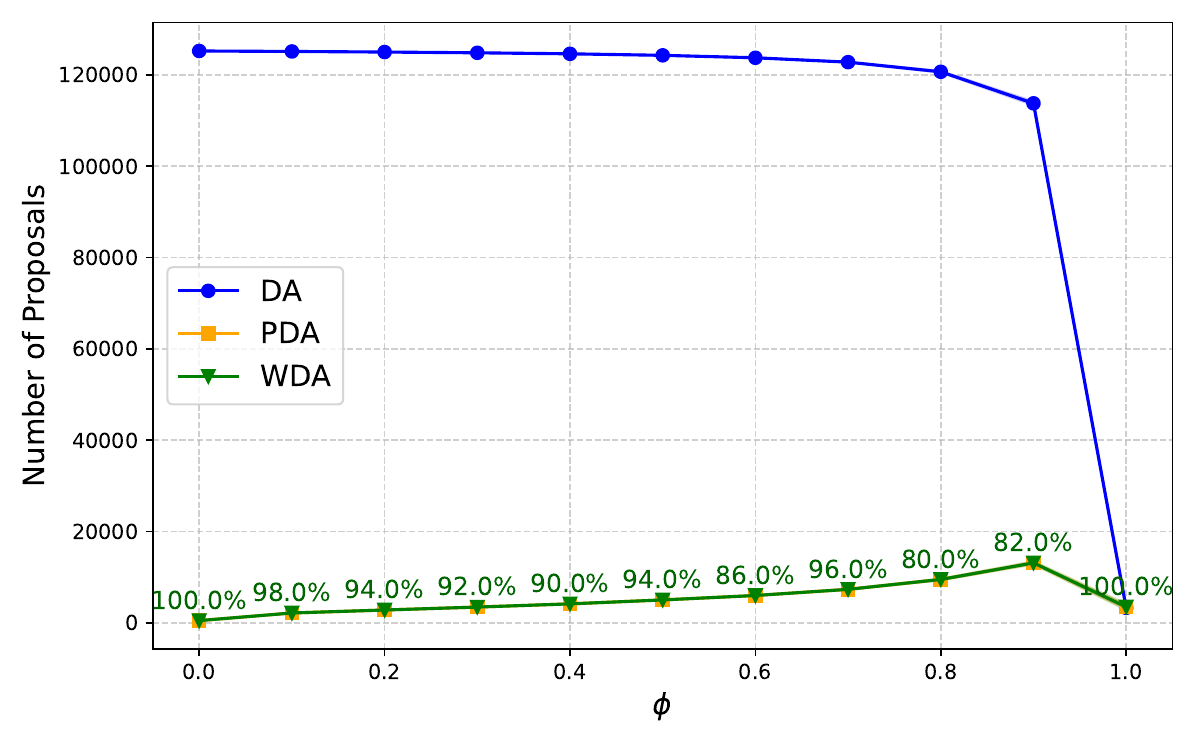}
    \caption{Mallows Markets}
    \label{fig:mallows}
  \end{subfigure}
  \hfill
  \begin{subfigure}{0.32\textwidth}
    \centering
    \includegraphics[width=\linewidth]{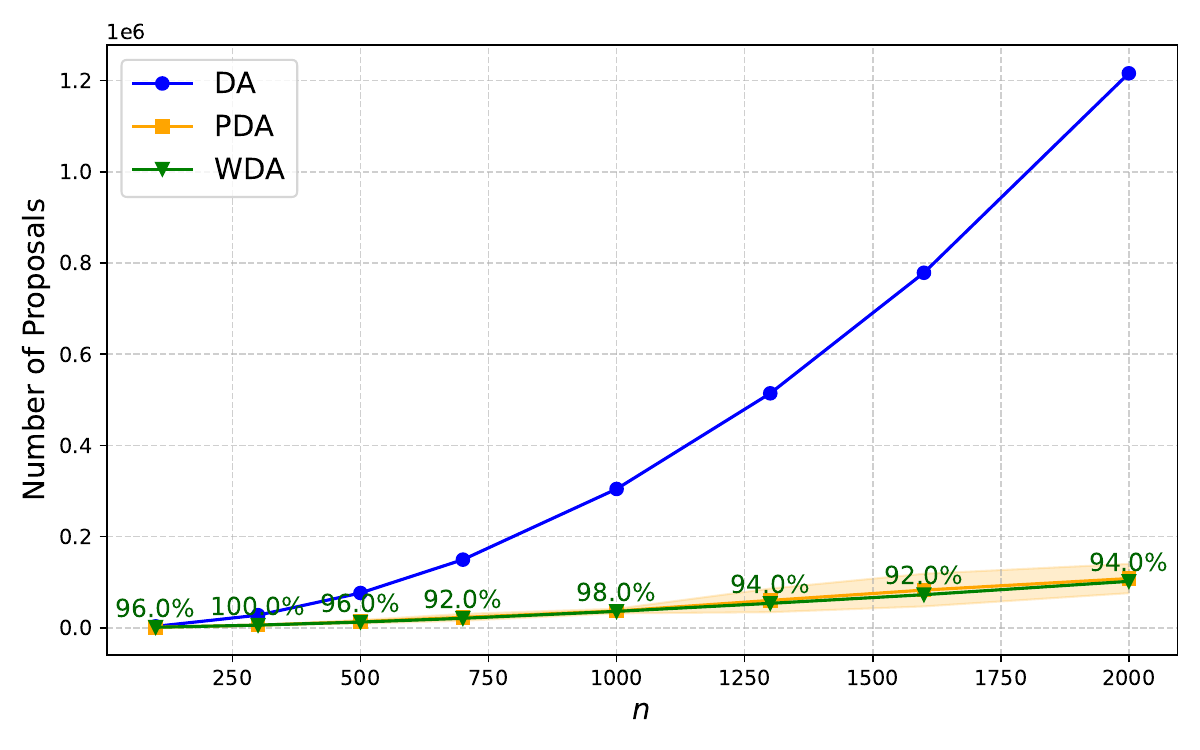}
    \caption{Dating Markets}
    \label{fig:dating}
  \end{subfigure}
  \hfill
  \begin{subfigure}{0.32\textwidth}
    \centering
    \includegraphics[width=\linewidth]{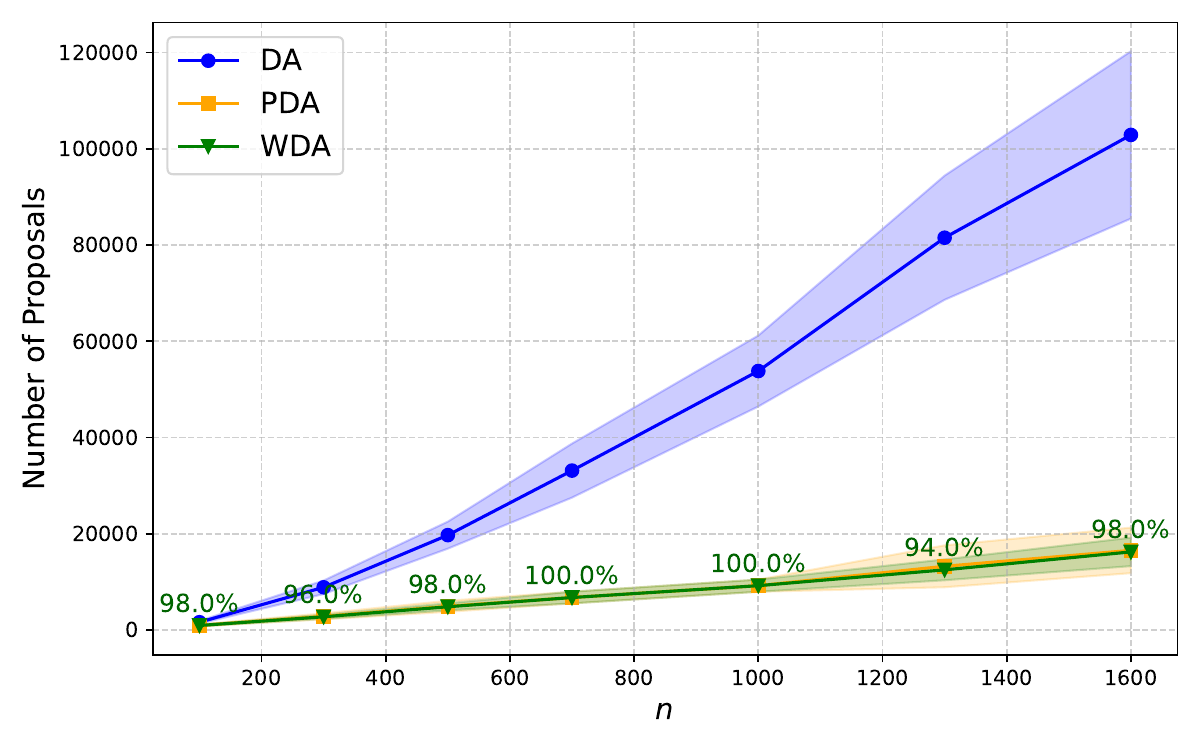}
    \caption{Tiered Markets}
    \label{fig:tiered}
  \end{subfigure}

  \caption{Proposal counts for classic DA, WDA, and PDA across the three market models. The green numbers show the percentage of instances in which WDA finds a stable matching. }
  \label{fig:prop-cnt}
\end{figure*}

\begin{figure*}[t]
  \centering
  \begin{subfigure}{0.32\textwidth}
    \centering
    \includegraphics[width=\linewidth]{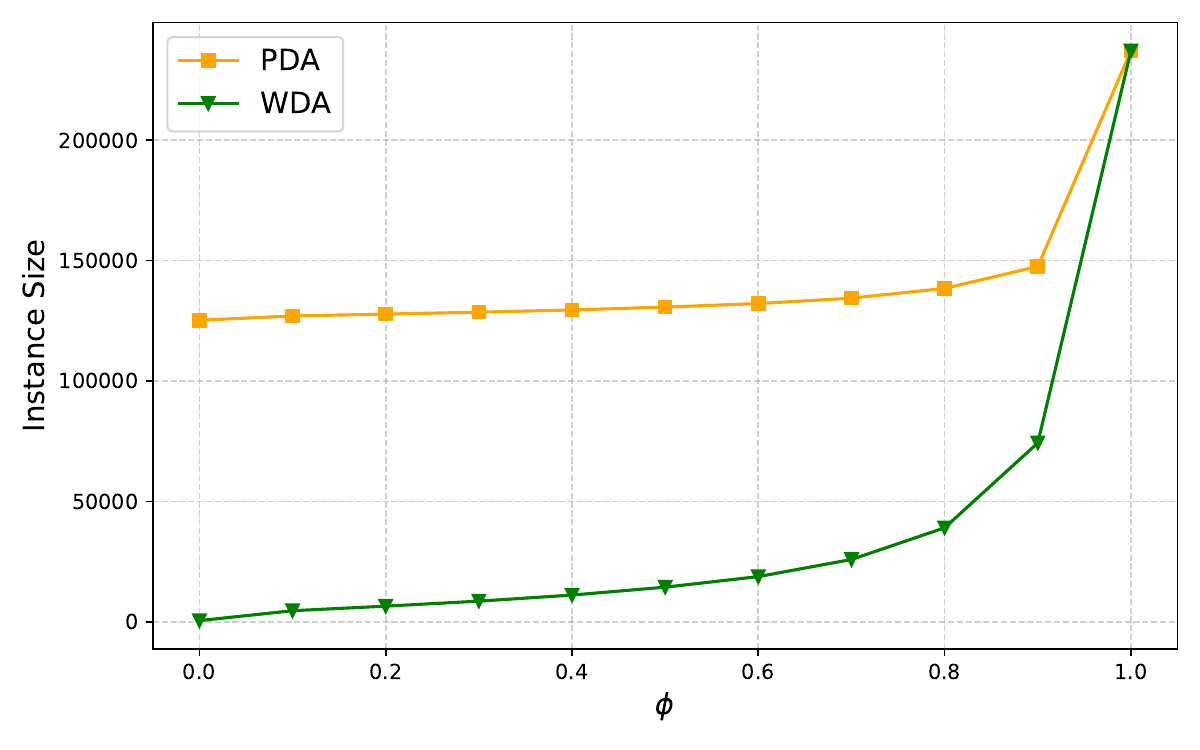}
    \caption{Mallows Markets}
    \label{fig:mallows2}
  \end{subfigure}
  \hfill
  \begin{subfigure}{0.32\textwidth}
    \centering
    \includegraphics[width=\linewidth]{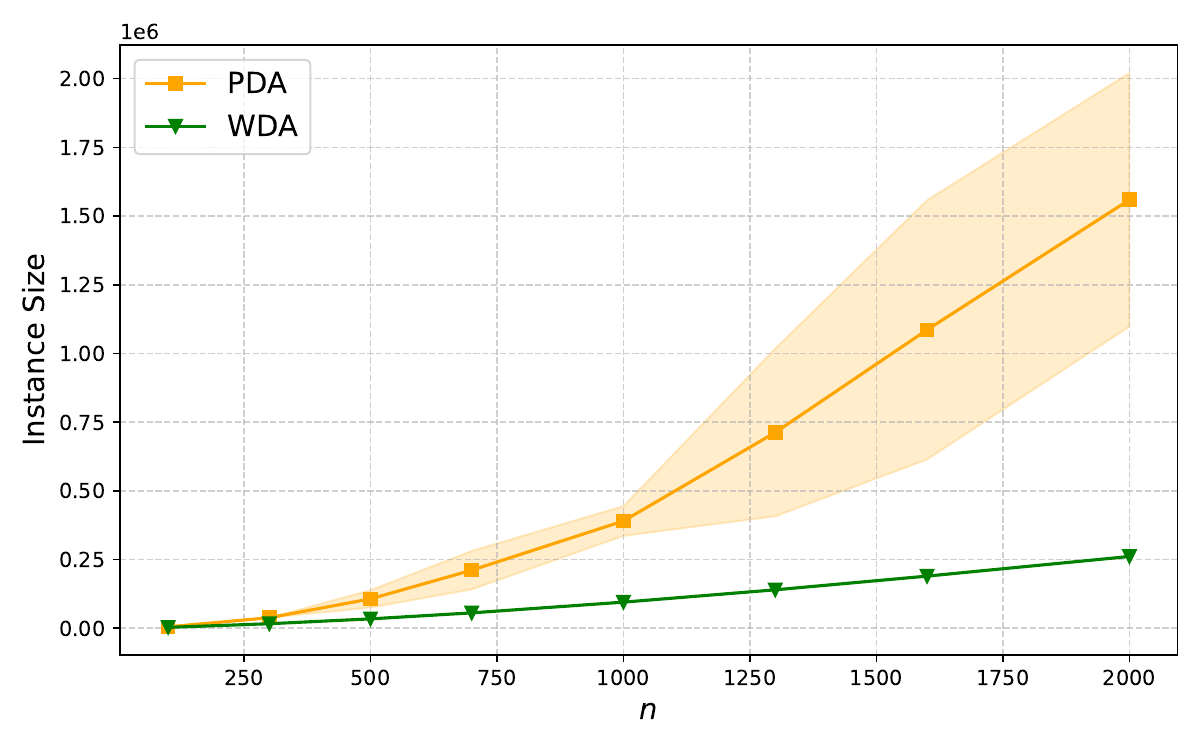}
    \caption{Dating Markets}
    \label{fig:dating2}
  \end{subfigure}
  \hfill
  \begin{subfigure}{0.32\textwidth}
    \centering
    \includegraphics[width=\linewidth]{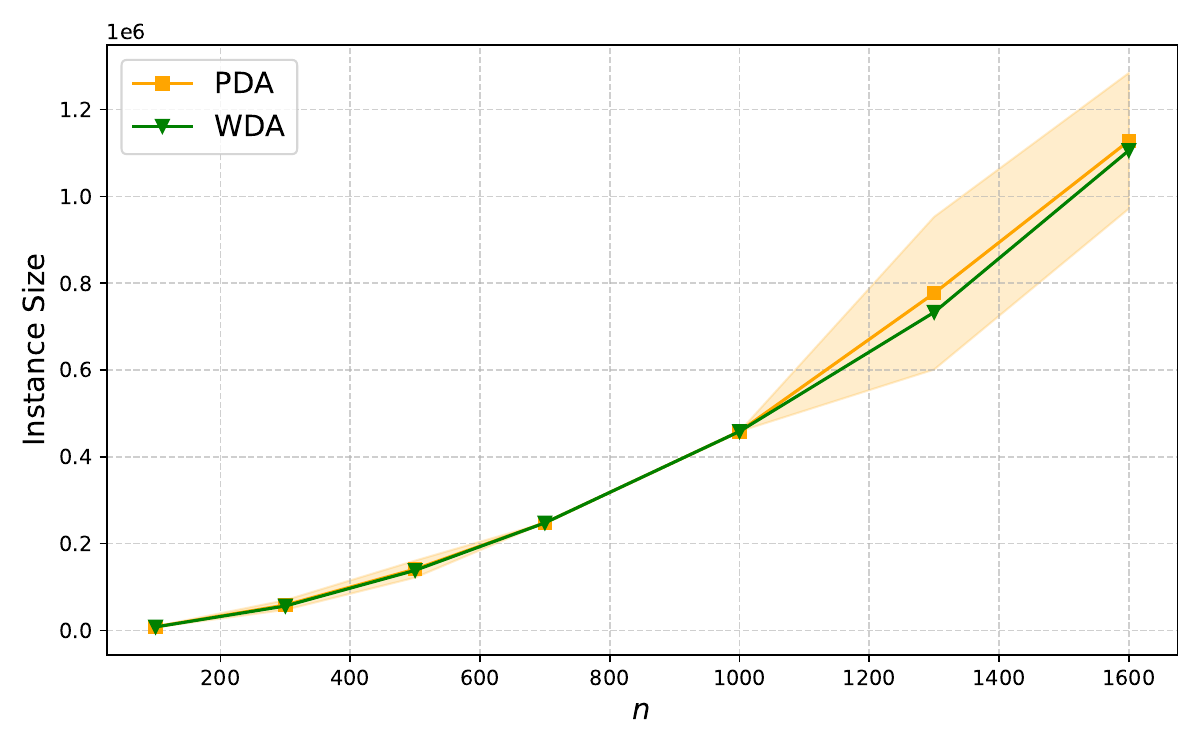}
    \caption{Tiered Markets}
    \label{fig:tiered2}
  \end{subfigure}

  \caption{Instance size for WDA and PDA across the three market models.}
  \label{fig:instance-size}
\end{figure*}

\subsection{Experimental Results}
Figures \ref{fig:prop-cnt} and \ref{fig:instance-size} show the results of the experiments. In all of the plots, the light shaded regions represent one standard deviation around the mean of the 50 observations; when they are not visible, the variance is very small. 

These results provide strong empirical validation for the theoretical bounds established in Sections~\ref{sec:WDA} and~\ref{sec:PDA}. Specifically, the high stability rates for WDA  confirm the robustness of Theorem \ref{thm:WDA}. Furthermore, the fact that PDA and WDA proposal counts track each other closely, and grow far more slowly than classic DA, suggests  that these market models exhibit the structural regularities
assumed by our analysis.

Overall, the empirical results closely match our theoretical bounds. When the predicted windows contain the true stable matches, both WDA and PDA achieve dramatic reductions in the number of proposals. Moreover, the output of WDA is mostly stable, and PDA remains efficient in practice even when the predicted prefixes occasionally do not contain any stable matching and the lists must be extended.

\paragraph{Mallows markets.} In the Mallows setting, as long as ${\phi \neq 1}$, meaning there is some correlation among the preferences, classic DA consistently makes substantially more proposals than WDA and PDA, often by about an order of magnitude or more. The case where $\phi = 1$ serves as a sanity check: when preferences are uniform random, classical theory predicts that DA already runs in near-linear time, and our prediction-guided algorithms cannot asymptotically improve upon it \citep{wilson1972analysis}.

\paragraph{Dating and tiered markets.} As we vary $n$, the number of proposals made by PDA and WDA grows at a far slower rate than in classic DA. As the market grows, the proposal counts under DA rise sharply, while the proposal counts for PDA and WDA increase much more gently, preserving a large and widening gap over the tested range.

In terms of stability, WDA produces a stable outcome in almost every instance. However, since WDA and PDA are consistently making roughly the same number of proposals, one might wonder what the advantage of WDA over PDA is, given that PDA is guaranteed to find a stable matching. The answer is in the comparison of instance size (Figure \ref{fig:instance-size}). In many real-world applications of stable matchings, constructing or submitting long preference lists is costly or constrained, so being able to use shorter preference lists, meaning that participants rank fewer options, is a significant advantage of WDA over PDA. 

Finally, it is worth noting that the proposal counts of PDA and WDA are often very close because they frequently output the same stable matching, and receiving agents are rarely proposed to by agents ranked above their lower match-rank prediction. Consequently, removing those parts of the lists typically does not affect the number of proposals.
\section{Conclusion}
In this paper, we develop a framework for analyzing stability when only subsets of preference lists are used. We study two different prediction-based truncation strategies.
When predictions on the hospitals’ true matches are accompanied by error bounds, we show that the resident-proposing window-truncated algorithm recovers the perfect stable matching and is substantially more efficient, while the hospital-proposing variant may return an unstable matching. 
We show that a more conservative prefix-truncated approach does not need to rely on known error bounds.
We prove a lower bound that shows that our assumption that at least one stable matching on the original instance exists in the pruned lists is necessary for efficiency, even if the predictions are near-perfect.

Together with our empirical evaluation, these results clarify when pruning is safe and how prediction-guided mechanisms can be applied reliably, providing a foundation for stable and scalable matching algorithms in large markets where full preference information is rarely feasible.

\bibliography{refs}
\bibliographystyle{plainnat}

\end{document}